\documentclass[11pt]{article}

\usepackage{amsmath,amsthm,verbatim,amssymb,amsfonts,amscd, graphicx}
\usepackage[utf8]{inputenc}
\usepackage{graphics}
\usepackage{hyperref}
\hypersetup{
    colorlinks=true,   	% false: boxed links; true: colored links
    linkcolor=red,      % color of internal links (change box color 
			% with linkbordercolor)
    citecolor = [rgb]{0 0.7 0},   	% color of links to bibliography
    filecolor=magenta, 	% color of file links
    urlcolor=blue
}
\usepackage{enumitem}
\usepackage{apptools}
\usepackage{titlesec}
\usepackage[baseline]{euflag}
\usepackage{authblk}
\usepackage{appendix}
\usepackage{xcolor}
\AtAppendix{\counterwithin{lemma}{section}}

\topmargin0.0cm
\headheight0.0cm
\headsep0.0cm
\oddsidemargin0.0cm
\textheight23.0cm
\textwidth16.5cm
\footskip1.0cm
\theoremstyle{plain}

\newtheorem{theorem}{Theorem}
\newtheorem{corollary}{Corollary}
\newtheorem{lemma}{Lemma}
\newtheorem{proposition}{Proposition}

\theoremstyle{definition}

\newcommand{\Var}{\mathrm{Var}}
\newcommand{\RL}{\mathbb{R}}
\newcommand{\snr}{{\sf snr}}

\newcommand{\p}[1]{
{#1}^{\prime}
}

\newcommand{\Hrm}{{\rm H}}

\newcommand{\R}{\mathbb{R}}
\newcommand{\E}{\mathbb{E}}

\newcommand{\ra}{\rightarrow}

\begin{document}
 
\title{The entropic doubling constant\\
and robustness of Gaussian codebooks for additive-noise channels}
\author[1]{Lampros Gavalakis%
\thanks{L.G. has received funding from the 
  European Union's Horizon 2020 research and innovation program
  under the Marie Sklodowska-Curie grant agreement No 101034255 \euflag\ 
  and by the B{\'e}zout Labex, funded by ANR, reference ANR-10-LABX-58.}}
\author[2]{Ioannis Kontoyiannis}
\author[3]{Mokshay Madiman}
\affil[1]{\small Univ Gustave Eiffel, Univ Paris Est Creteil, CNRS, LAMA 
UMR8050 F-77447 Marne-la-Vall{\'e}e, France}
\affil[2]{Statistical Laboratory, Univ of Cambridge,
Wilberforce Road, Cambridge CB3 0WB, U.K.}
\affil[3]{Department of Mathematical Sciences, Univ of Delaware, 
501 Ewing Hall, Newark, DE 19716, U.S.A.}

\maketitle

\begin{abstract}
Entropy comparison inequalities are obtained for the differential 
entropy $h(X+Y)$ of the sum of two independent random vectors
$X,Y$, when one is replaced by a Gaussian.
For identically distributed random vectors $X,Y$, 
these are closely related to bounds on 
the entropic doubling constant, which quantifies the 
entropy increase when adding an independent copy of a random 
vector to itself. Consequences of both large and small doubling are
explored. For the former, lower bounds are deduced on the entropy 
increase when adding an independent Gaussian, while for the latter, 
a qualitative stability result for the entropy power 
inequality is obtained.
In the more general case of
non-identically distributed random vectors
$X,Y$, a Gaussian comparison inequality 
with interesting implications for channel coding
is established:
For additive-noise channels with a power constraint, 
Gaussian codebooks come within a
$\frac{\snr}{3\snr+2}$ factor of capacity.
In the low-SNR regime
this improves the half-a-bit additive bound 
of Zamir and Erez (2004). 
Analogous results
are obtained for additive-noise multiple
access channels, and for linear, additive-noise
MIMO channels.
\end{abstract}

\noindent
{\small 
{\bf Keywords --- } Entropy inequalities;
entropic doubling; entropy power;
Gaussian codebook; maximum entropy;
capacity; multiple access channel;
MIMO channel
}

\newpage

\section{Introduction}
%%%%%%%%%%%%%%%%%%%%%%%%%%%%%%%%%%%%%%%%%%%%%%%%%%%%%%%%%%%%%%%%%%%%%%%%%
The (differential) {\it entropy} 
of an $\RL^{d}$-valued random vector $X$ with density $f$
is defined as,
$$
h(X)=-\int_{\RL^d} f(x) \log f(x) \,dx ,
$$
whenever the integral exists. When $X$ is supported on a 
strictly lower-dimensional subset of $\RL^d$ or, more generally,
when the law of $X$
does not have a density with respect to the
$d$-dimensional Lebesgue measure, 
we set $h(X) =-\infty$. The {\it entropy power} of $X$ is 
defined by
$N(X)=e^{\frac{2}{d}h(X)}$ and takes values in $[0,\infty]$. 

\subsection{Small and large entropic doubling}
%%%%%%%%%%%%%%%%%%%%%%%%%%%%%%%%%%%%%%%%%%%%%%%%%%%%%%%%%%
\label{s:smalllarge}
The (entropic) 
\textit{doubling constant} of a random vector $X$
in $\R^d$
is defined as,
$$
\sigma[X] := \frac{N(X+\p{X})}{2N(X)}
=\frac{1}{2}\exp\Big\{\frac{2}{d}[h(X+X')-h(X)]\Big\},
$$
where $X, \p{X}$ are independent and identically distributed 
(i.i.d.). 

Observe that $\sigma[X]$ is a simple monotonic function
of the entropy increase $h(X+X')-h(X)$.
By the entropy power inequality
(EPI), $\sigma[X] \geq 1$ for any random vector $X$,
with equality if and only if $X$ is Gaussian. 
The study of the doubling constant is obviously intimately related
to the study of the EPI, in particular
in the context of the central limit
theorem~\cite{barron:clt,johnson-barron:04},
as well as in connection with one of the discrete
EPI discrete analogues
put forward by Tao~\cite{tao:10}. 
For continuous random variables, the doubling
constant and its connection with entropy inequalities
motivated by additive combinatorics are discussed
extensively in~\cite{KM:14,KM:16}.
Applications of bounds on the doubling constant
are described in the following two subsections.

From the definition, we see that
$h(X+\p{X}) -h(X)=\frac{d}{2} \log (2\sigma[X])$. 
If $Z$ is an independent Gaussian with the same covariance matrix as $X$, 
the maximum entropy property of the Gaussian trivially implies that,
$$
h(Z)\geq h\Big(\frac{X+\p{X}}{\sqrt{2}}\Big) = h(X+\p{X}) -\frac{d}{2}\log 2.
$$
Therefore, $h(Z)-h(X)\geq \frac{d}{2} \log\sigma[X]$, and since
$h(X+Z)\geq h(Z)$, we have:
\begin{equation}
h(X+Z)-h(X) \geq \frac{d}{2} \log\sigma[X].
\label{eq:intro}
\end{equation}
This says that adding a Gaussian with the same covariance 
matrix increases the entropy of $X$ by an amount that can
be bounded below in terms of the doubling constant. 
Our first result gives a quantitatively sharper version
of that statement.

\begin{theorem}[Large doubling]
\label{thm:quant}
Suppose $X$ is a  random vector in $\R^d$,
and $Z$ is an independent Gaussian random vector with 
the same covariance matrix as $X$. Then,
\begin{equation*}
h(X+Z)-h(X) \geq  \frac{d}{2}\log \big[ 
\max \big\{ \varphi\, \sigma[X] , 1 + \sigma[X] \big\} \big] ,
\label{combined-doublingbound}
\end{equation*}
where $\varphi=\frac{1+\sqrt{5}}{2}>1$ is the golden ratio.
\end{theorem}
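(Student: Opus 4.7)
The plan is to establish the two terms inside the maximum as separate lower bounds on $h(X+Z)-h(X)$. For the $(1+\sigma[X])$ term, I would apply the EPI directly, $N(X+Z) \geq N(X) + N(Z)$, and combine with the maximum-entropy bound $N(Z) \geq \sigma[X]\,N(X)$ already derived in the introduction, to obtain $N(X+Z) \geq (1+\sigma[X])\,N(X)$; equivalently, $h(X+Z) - h(X) \geq \tfrac{d}{2} \log(1+\sigma[X])$.

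For the $\varphi\,\sigma[X]$ term, the key tool I would invoke is the submodularity of entropy for independent sums, equivalent to the data-processing inequality: for independent random vectors $A, B, C$,
\[
  h(A+B+C) + h(B) \leq h(A+B) + h(B+C),
\]
valid because $A \to A+B \to A+B+C$ is a Markov chain (since $C$ is independent of $(A,B)$). I would apply this with $A = X$, $B = Z$, and $C = X'$ an independent copy of $X$; using $h(Z+X') = h(Z+X)$, this yields $2\,h(X+Z) \geq h(X+X'+Z) + h(Z)$. A further application of the EPI to the independent sums $X+X'$ and $Z$ gives $N(X+X'+Z) \geq N(X+X') + N(Z) = 2\sigma[X]\,N(X) + N(Z) \geq 3\,\sigma[X]\,N(X)$, using the definition of $\sigma[X]$ together with $N(Z) \geq \sigma[X]\,N(X)$. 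Substituting this and $h(Z) \geq \tfrac{d}{2}\log(\sigma[X]\,N(X))$ and simplifying,
\[
  h(X+Z) - h(X) \;\geq\; \tfrac{d}{4}\log\!\bigl(3\,\sigma[X]^2\bigr) \;=\; \tfrac{d}{2}\log\!\bigl(\sqrt{3}\,\sigma[X]\bigr),
\]
which is strictly stronger than the required $\tfrac{d}{2}\log(\varphi\,\sigma[X])$ since $\sqrt{3}>\varphi$.

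Taking the maximum of the two bounds finishes the proof. The main nontrivial step is the submodularity/data-processing inequality; once it is in hand, everything else reduces to the EPI and the maximum-entropy property. The golden ratio appears in the final statement because $\varphi$ is the unique value at which $\varphi\sigma = 1+\sigma$ (using $\varphi^2 = \varphi+1$), so that writing the combined bound as $\max\{\varphi\sigma, 1+\sigma\}$ gives a clean interpolation between the small- and large-doubling regimes.
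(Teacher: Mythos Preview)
Your argument is correct, and for the second half it actually yields a sharper constant than the theorem requires. For the $(1+\sigma[X])$ bound, the paper appeals to its Theorem~\ref{prop:doub}, whose proof runs through the central limit theorem, upper semicontinuity of entropy, and the fractional EPI (Lemma~\ref{lem:ep-fsa}); your one-line route via the classical two-summand EPI together with $N(Z)\ge\sigma[X]\,N(X)$ is much shorter and gives the identical inequality. For the $\varphi\,\sigma[X]$ bound, both you and the paper (Theorem~\ref{thm:ENTqn}) begin with submodularity (Lemma~\ref{lem:submod}) to obtain $2h(X+Z)\ge h(X+X'+Z)+h(Z)$. The paper then lower-bounds $N(X+X'+Z)$ using the ABBN inequality (Lemma~\ref{lem:ep-fsa}$(ii)$), which reintroduces $N(X+Z)$ on the right-hand side and leads to a self-referential inequality whose solution produces the golden ratio. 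You instead apply the ordinary EPI to $(X+X')+Z$ and again substitute $N(Z)\ge\sigma[X]\,N(X)$, avoiding the implicit equation and obtaining $h(X+Z)-h(X)\ge\tfrac{d}{2}\log(\sqrt{3}\,\sigma[X])$ directly---a strict improvement since $\sqrt{3}>\varphi$. (Equivalently, your chain gives $h(X+Z)\ge h(X+X')-\tfrac{d}{2}\log(2/\sqrt{3})$, tightening the constant in Theorem~\ref{thm:ENTqn} as well.) The paper's more elaborate machinery is not wasted---the CLT/fractional-EPI argument of Theorem~\ref{prop:doub} is what drives the channel-coding bounds of Section~\ref{sec:channel}---but for Theorem~\ref{thm:quant} itself your elementary route is both simpler and slightly stronger. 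Your closing remark about $\varphi$ being the crossover point $\varphi\sigma=1+\sigma$ is a pleasant observation, but it is not the mechanism by which $\varphi$ enters the paper's proof; there it arises from inverting the function $g(a)=a-\tfrac{d}{4}\log(1+2e^{2a/d})$.
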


We will prove the two parts of the theorem separately,
in Section~\ref{s:main}.
The first bound is simply a rewriting
of Theorem~\ref{thm:ENTqn},
and the second bound is a special case 
of Theorem~\ref{prop:doub}, corresponding
to $K_Z=K_X$ there.
Moreover, the result of Theorem~\ref{prop:doub},
and by extension Theorem~\ref{thm:quant},
is sharp in the sense that equality holds if $X$ is Gaussian.

Theorem~\ref{thm:quant} is about the consequences 
large doubling.
In the reverse direction, next, we explore the consequences of small doubling. 
First, we note that stability of the lower bound 
$\sigma[X]\geq 1$ implied by the EPI may fail for
``strong'' distance measures like the quadratic
Wasserstein distance
or relative entropy. That is, it may be possible
to have $\sigma[X]$ arbitrarily close to 1,
while the distance of its distribution from 
the Gaussian remains bounded away from zero;
see the discussion in~\cite{courtade:18}.

But, as we show in Theorem~\ref{smalldoublinglevy}
stated next, 
if we impose a mild moment assumption,
a more qualitative version
of stability in the sense of weak convergence
does in fact hold. Also, under much stronger 
assumptions on the densities considered,
a different stability result for the
quadratic Wasserstein distance is established
in~\cite{courtade:18}.

Our stability result in Theorem~\ref{smalldoublinglevy} 
below follows from Theorem~\ref{smalldoublinglevy2}
in Section~\ref{s:main},
which is established
by combining an earlier
result of Carlen and Soffer~\cite{carlen:91}  
with the stability of Cram\'{e}r's theorem 
due to L{\'e}vy~\cite{levy:book}.

For a random vector $X$ with law $\nu$, denote,
$$
\psi_{X}(R) := \mathbb{E}\big(\|X\|^2; \|X\| > R\big)
=\int_{\|x\|>R} \|x\|^2 d\nu(x),
\qquad R \geq 0,
$$
where $\|\cdot\|$ denotes the usual Euclidean norm,
$\|x\|=[x_1^2+x_2^2+\cdots+x_d^2]^{1/2}$,
$x\in\RL^d$.
We may think of $\psi_{X}$ as the ``variance profile'' of $X$.
If $X$ has density $f$, we also write $\psi_f$ for $\psi_X$.
By elementary reasoning, the function 
$\psi_X:[0,\infty)\ra [0,\infty]$ is non-increasing. Moreover, if $X$ 
has a density $f$ and finite variance, then 
$\psi_X(R)=\psi_f(R)=\int_{\|x\|>R} \|x\|^2 f(x) dx$ 
is continuous in $R$, with $\psi_X(0)=\E(\|X\|^2)$ 
and $\lim_{R\ra \infty} \psi_X(R)=0$.
Let $\mathcal{F}_d$ be the class of probability density 
functions on $\RL^d$, namely, of all functions
$f:\RL^d\ra[0,\infty)$ such that $\int_{\RL^d} f(x) dx=1$.
For a given function $\psi:[0,\infty)\ra [0,\infty)$ with 
$\lim_{R\ra \infty} \psi(R)=0$, we define the class of 
densities ${\cal C}_\psi\subset{\cal F}_d$ as:
\begin{equation}
\mathcal{C}_{\psi}:=\big\{ f\in \mathcal{F}_d: \psi_f \leq \psi\big\},
\label{eq:Cpsi}
\end{equation}
where the inequality $\psi_f\leq \psi$ is understood to be pointwise.
Write $\gamma_{\mu,\Sigma}$ for the Gaussian measure on 
$\mathbb{R}^d$ with mean $\mu$ and covariance $\Sigma$, and 
$d_{\rm LP}$ 
for the L\'evy--Prokhorov metric on the space of probability 
measures (see Section \ref{sec:prelim} for the precise definition).

\begin{theorem}[Small doubling]
\label{smalldoublinglevy}
Suppose $X,\p{X}$ are i.i.d. random vectors with law $\nu$ 
on $\mathbb{R}^d$, which has density in $\mathcal{C}_{\psi}$
for some $\psi:[0,\infty)\ra [0,\infty)$ with 
$\lim_{R\ra \infty} \psi(R)=0$.
%and positive-definite covariance matrix. 
Then, for each $\epsilon > 0$, there is a 
$\delta = \delta(\epsilon,\psi,d) > 0$ 
such that $h(X+\p X) > -\infty$, and,
\begin{equation*}
\sigma[X] < 1+ \delta,
\end{equation*}
implies that  $d_{\rm LP}({\nu}, \gamma_{\mu,\Sigma}) < \epsilon$ 
for some for some $\mu,\Sigma$.
\end{theorem}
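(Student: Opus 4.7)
The plan is to combine, as the authors announce, the Carlen--Soffer stability of the entropy power inequality with L\'evy's quantitative version of Cram\'er's decomposition theorem. The starting observation is that the hypothesis $\sigma[X]<1+\delta$ is exactly a bound on the entropy jump under one block-variable summation, since by definition
\[
0\le h\Bigl(\frac{X+\p{X}}{\sqrt{2}}\Bigr)-h(X)=\frac{d}{2}\log\sigma[X]<\frac{d}{2}\log(1+\delta),
\]
the lower bound being supplied by the EPI. Writing $Z$ for the Gaussian with the same mean $\mu$ and covariance $\Sigma_X$ as $X$, the maximum-entropy property gives $h((X+\p{X})/\sqrt{2})\le h(Z)$, so $(X+\p{X})/\sqrt{2}$ is an approximate extremizer in the chain $h(X)\le h((X+\p{X})/\sqrt{2})\le h(Z)$ inside the tight class $\mathcal{C}_\psi$.

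The first substantive step is to invoke the Carlen--Soffer stability theorem. Its content, specialized to this setting, is that on a tight family of densities with fixed nondegenerate covariance, a small entropy production at one step of block-variable summation forces the convolved density to be weakly close to the matched Gaussian. The class $\mathcal{C}_\psi$ is tight since $\psi(R)\to 0$ together with the uniform second-moment bound $\psi(0)$, so this yields an estimate of the form
\[
d_{\rm LP}\bigl(\mathrm{Law}(X+\p{X}),\,\gamma_{2\mu,\,2\Sigma_X}\bigr)<\epsilon_1,
\]
with $\epsilon_1=\epsilon_1(\delta,\psi,\Sigma_X,d)\to 0$ as $\delta\to 0$, after undoing the $\sqrt{2}$ rescaling. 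Note that this controls the law of $X+\p{X}$, not yet that of $X$ itself.

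The second substantive step is L\'evy's stability of Cram\'er's theorem, used to ``take a square root'' in distribution: Cram\'er says that if $X+\p{X}$ is exactly Gaussian with $X,\p{X}$ i.i.d.\ then $X$ is Gaussian, and L\'evy's refinement is the quantitative counterpart---for $\nu$ in a tight family, closeness of $\nu*\nu$ to a Gaussian in $d_{\rm LP}$ forces $\nu$ itself to be close to some Gaussian $\gamma_{\mu,\Sigma}$. Tightness of $\mathcal{C}_\psi$ again makes the hypothesis directly applicable, and chaining the two moduli delivers the required $\delta=\delta(\epsilon,\psi,\Sigma_X,d)$. The principal obstacle is uniformity: neither stability ingredient admits a universal quantitative modulus, and both are compactness-based. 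The key is that the single class $\mathcal{C}_\psi$ with fixed invertible covariance $\Sigma_X$ simultaneously supplies the tightness required at each of the two steps, and the assumption $h(X+\p{X})>-\infty$ (which by subadditivity forces $h(X)>-\infty$) keeps us in the absolutely-continuous regime where the entropy inequalities have content.
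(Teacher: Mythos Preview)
Your outline has the right two ingredients, but there is a genuine gap in the first step. The Carlen--Soffer stability result (Proposition~\ref{prop:carlen}) does \emph{not} apply on the class $\mathcal{C}_\psi$: it requires the summands to lie in $\mathcal{C}_{I_0,\psi}$, i.e., to satisfy an \emph{a priori} Fisher-information bound in addition to the variance-profile bound. Densities in $\mathcal{C}_\psi$ need not have finite (let alone uniformly bounded) Fisher information, so you cannot invoke Carlen--Soffer on $X,\p{X}$ directly; tightness alone is not enough. The paper's fix is an extra Gaussian-smoothing step: one passes to $\tilde{X}_i=(X_i+Z_i)/\sqrt{2}$ with $Z_i$ independent standard Gaussians, which forces $I(\tilde{X}_i)\le 2d$, keeps the variance profile controlled, and---by a monotonicity inequality of Carlen--Soffer---does not increase the entropy jump. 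Only then does Proposition~\ref{prop:carlen} apply, yielding $D(\tilde X_i)=D\bigl((X_i+Z_i)/\sqrt{2}\bigr)<\epsilon_1$.

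This also changes how L\'evy's stability enters. Note that Carlen--Soffer concludes the \emph{summands} are close to Gaussian in relative entropy, not the sum; had it applied to $X,\p{X}$ directly, you would get $D(X)$ small outright and L\'evy would be superfluous. In the paper's argument, Pinsker converts $D(X_i+Z_i)<\epsilon_1$ into $d_{\rm LP}$-closeness of $\nu*\gamma_{0,\mathrm{I}_d}$ to a Gaussian, and Theorem~\ref{levystabilityd} is then applied to the decomposition $X_i+Z_i$ (one summand exactly Gaussian) to peel off the smoothing and conclude that $\nu$ is $d_{\rm LP}$-close to some Gaussian. So L\'evy's role is to undo the auxiliary Gaussian convolution, not to ``take a square root'' of $\nu*\nu$ as in your sketch.
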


%TO DO: (Done, right?)
%Define and talk about doubling constant and its relevance.
%Include Theorems 2-4 written in terms of $\sigma[X]$.

\subsection{The Gaussian as worst-case noise}
%%%%%%%%%%%%%%%%%%%%%%%%%%%%%%%%%%%%%%%%%%%%%%%%%%%%%%%%%%
\label{s:gaussian}
Suppose $Y,Z$ are independent random variables, 
where $Z$ is a
Gaussian with the same variance as~$Y$. 
The maximum entropy property of the Gaussian 
states that $h(Z)\geq h(Y)$.
A natural but naive guess at a generalization 
of this property could be that,
for any random variable $X$ independent of $Y,Z$,
we always have:
\begin{equation} 
\label{gaussmixture}
h(X + Z) \geq h(X + Y).
\end{equation} 
As it turns out, this is not always true.
This fact has been part of the information-theoretic
folklore for quite some time.
An explicit counterexample was provided 
by Eskenazis, Nayar and Tkocz (ENT)~\cite{eskenazis:18},
and another counterexample was found by 
Madiman, Nayar, and Tkocz~\cite{madiman:20}, where $X
$ and $Y$ are in fact log-concave, 
symmetric and with unit variance. 

But there are also positive results,
along several different directions.
First, one can restrict the class
of distributions considered for $X$ and $Y$.
Indeed,
ENT~\cite{eskenazis:18} 
show that~(\ref{gaussmixture})
does actually hold for Gaussian scale mixtures, that is, for 
independent random variables 
$X$ and $Y$ that can be expressed 
as $\sqrt{W}Z'$, where $Z'\sim N(0,1)$ and $W$ is almost 
surely positive and independent of~$Z'$.

Second, one might ask for approximate 
versions of~(\ref{gaussmixture}).
Zamir and Erez~\cite{zamir:04}, motivated 
by the question of robustness of Gaussian inputs in additive-noise 
channels, showed that we always have:
\begin{equation} 
\label{ZE04result}
h(X + Z) \geq h(X + Y) - \frac{1}{2}\log{2}.
\end{equation}
Hence, even though~\eqref{gaussmixture} is not true in general, 
$h(X+Z)$ can not be much smaller than $h(X+Y)$. 

Third, we might restrict $Y$ to have the same distribution as $X$.
The question of 
whether~(\ref{gaussmixture}) holds under 
this condition was raised in ENT~\cite{eskenazis:18},
and it remains open. But
in this case we can establish that an improvement of 
bound~(\ref{ZE04result}) holds.
Specifically, in Theorem~\ref{thm:ENTqn}
we prove that, whenever $X,Y$ are i.i.d.\ random variables,
$$h(X+Z)\geq h(X+Y) -\frac{1}{2}\log\Big(\frac{2}{\varphi}\Big),$$
where $\varphi$ again denotes the golden ratio.
This improves the $\frac{1}{2}\log 2\approx 0.347$ error
term in~(\ref{ZE04result}) to $\frac{1}{2}\log(2/\varphi)\approx 0.106$.

\subsection{Robustness of Gaussian channel codes}
%%%%%%%%%%%%%%%%%%%%%%%%%%%%%%%%%%%%%%%%%%%%%%%%%%%%%%%%%%%%

A first connection between entropic doubling
and Gaussian channel inputs
is revealed by re-interpreting our Theorem~\ref{prop:doub}.
As noted in Corollary~\ref{cor:wc} in Section~\ref{s:robust}, 
it can be rewritten as the following channel
capacity bound: If $Y=X+Z$ is an additive noise channel,
with input $X$ constrained to average power $P$,
and with noise $Z$ with power $N$, then,
$$C(Z;P)\geq I(X^*;X^*+Z)\geq \frac{1}{2}\log 
\Big(1+\frac{P}{N}\sigma[Z]\Big),$$
where
$C(Z;P)$ is the capacity of the channel,
$X^*\sim N(0,P)$ is a Gaussian input,
and $\sigma[Z]$ is the doubling constant of the noise $Z$.
Since $\sigma[Z]\geq 1$ with equality if and only if $Z$ is Gaussian,
this provides a pleasing proof of the worst-case noise property 
of the Gaussian distribution~\cite{cover:book2,lapidoth:96}.

The study of the robustness of Gaussian codebooks 
for additive noise channels with a power constraint
has a rich history, part of which is reviewed in 
Sections~\ref{s:robust} and~\ref{s:biblio}. 
Notably, Zamir and Erez~\cite{zamir:04}
showed that Gaussian inputs always come within
1/2 a bit of channel capacity:
$$I(X^*;X^*+Z)\geq C(Z;P)-\frac{1}{2}\log 2.$$
Since this bound is trivial when $C(Z;P)\leq \frac{1}{2}\log 2$,
they also raised the question of whether a
``multiplicative'' version of the above ``additive''
bound could be obtained, that would also provide
useful information in the ``low SNR'' regime, that is,
when the capacity is small.

Using an argument similar to that
employed in the proof of Theorem~\ref{prop:doub},
in Theorem~\ref{zamirimprovement} we offer a partial answer
to this question, by showing that, indeed,
we always have
$$I(X^*;X^*+Z)\geq \frac{\snr}{3\snr +2}C(Z;P),$$
where $\snr=\frac{P}{N}$.

Finally, in Sections~\ref{s:MAC} and~\ref{mimosection} we describe how the
same technique leads to analogous results on
the performance of time-shared Gaussian inputs 
for additive-noise multiple access channels,
and on the performance of uncorrelated Gaussian 
inputs for linear, additive-noise 
multiple-input multiple-output (MIMO) channels, respectively.

\subsection{Ideas and techniques}
%%%%%%%%%%%%%%%%%%%%%%%%%%%%%%%%%%%%%%%%%%%%%%%%%%%%%%%%%%%%

The technical tools employed in our derivations are 
a combination of elementary techniques, probability-theoretic
arguments, and applications of
recently developed entropy inequalities whose implications 
for communication have perhaps 
not been fully tapped yet. The first such inequality -- the submodularity 
of the
entropy of sums~\cite{KM:14} -- has seen numerous applications in 
convex geometry, e.g.~\cite{bobkov:13,fradelizi:24b},
and in additive combinatorics, 
e.g.~\cite{MMT:12,kontoyiannis-MM-ISIT:10}. The second --
the fractional superadditivity of entropy 
power~\cite{artstein:04,madiman:19} --
has found applications in information 
theory, e.g.~\cite{KYBMadiman:12,ekrem:08,he:11,tang:11}.

In the context of the analogies between information theory and convex 
geometry outlined in~\cite{dembo-cover-thomas:91,madiman:17}, 
we note that there exist 
volume analogues
of the submodularity of entropy stated in
Lemma~\ref{lem:submod}, 
see~\cite{fradelizi:24b,fradelizi:24},
and of the fractional 
superadditivity of entropy power
stated in Lemma~\ref{lem:ep-fsa},
see~\cite{bobkov:11b,fradelizi:18}.
But these correspondences are limited; for instance, 
the statements for volumes sometimes require 
different constants. These limitations, and
especially the unavailability of an additive functional such as the 
variance in the geometric setting, make it difficult to write down 
analogues of our results there. 
Nonetheless, it is a somewhat curious coincidence that 
the golden ratio appears both in our Theorem~\ref{thm:ENTqn} 
and in~\cite{fradelizi:24b}.

\newpage

\section{Preliminaries}
%%%%%%%%%%%%%%%%%%%%%%%%%%%%%%%%%%%%%%%%%%%%%%%%%%%%%%%%%%%%%%%%%%%%%%%%%
\label{sec:prelim}

We begin by collecting some preliminary definitions
and results that will be used later on, and we also 
generalize a one-dimensional classical Gaussian stability 
result due to L\'{e}vy to the case of random vectors
in $\RL^d$.

Let $X, Y$ be random vectors taking values in $\R^d$, 
with probability density functions $f$ and $u$ respectively. 
%(we define it here for $\R^n$ because we will discuss the .
The {\it relative entropy} is, as usual, defined by,
$$
D(X\|Y)=D(f\|u):=\int_{\R^d} f(x) \log\frac{f(x)}{u(x)} dx ,
$$
and is always nonnegative though possibly $+\infty$.
Furthermore, for a random vector $X$ with density $f$
and finite second moment we define,
$$
 D(X)=D(f):= \inf_{u} D(f\|u) = D(f\|g) ,
$$
with the infimum being taken over all Gaussian densities $u$ 
on $\R^d$, and achieved by the Gaussian density $g$ with the 
same mean and covariance matrix as $X$. The quantity $D(X)$ may 
be thought of as the {\it relative entropy from Gaussianity}. 
It is easy to see that
$D(f)=D(f\|g)=h(g)-h(f)$, which implies the Gaussian is the unique 
maximizer of entropy when the mean and covariance matrix are fixed.

Logarithms that appear in any expression involving entropies 
or relative entropies may be taken with respect to an arbitrary base, 
with the understanding that all entropies in that expression also use the 
same base. For concreteness, we assume that the base is~$e$
everywhere.

\medskip

\noindent
{\bf Standing assumption. } Throughout this paper, 
in any statement involving
entropy, entropy power, or a doubling constant, we implicitly
assume that those {\em exist} for the random variables
to which they correspond.

\medskip

We need two lemmas describing fundamental properties of the entropy 
under convolution. The first expresses the submodularity of the 
entropy of sums.

\begin{lemma} {\em \cite{KM:14}}
\label{lem:submod}
If $X_1, X_2, X_3$ are independent random vectors in $\RL^d$, then:
$$
h(X_1+X_2+X_3)+h(X_1) \leq h(X_1+X_2)+h(X_1+X_3).
$$
\end{lemma}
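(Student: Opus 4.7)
The plan is to rewrite the submodularity inequality as a data-processing statement for mutual information between $X_2$ and partial sums. Rearranging, the claim is equivalent to
\begin{equation*}
h(X_1+X_2+X_3) - h(X_1+X_3) \;\leq\; h(X_1+X_2) - h(X_1).
\end{equation*}
Since $X_2$ is independent of $(X_1,X_3)$, we have $h(X_1+X_2+X_3\mid X_2)=h(X_1+X_3)$ and $h(X_1+X_2\mid X_2)=h(X_1)$, so the two sides are exactly $I(X_2;X_1+X_2+X_3)$ and $I(X_2;X_1+X_2)$ respectively.

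Next I would verify the Markov chain $X_2 \to X_1+X_2 \to X_1+X_2+X_3$. Given $S:=X_1+X_2$, the random variable $X_1+X_2+X_3$ equals $S+X_3$, and $X_3$ is independent of $(X_1,X_2)$, hence independent of $(S,X_2)$. So the conditional law of $X_1+X_2+X_3$ given $(S,X_2)$ depends only on $S$, which is the Markov property. The data processing inequality then yields
\begin{equation*}
I(X_2;X_1+X_2+X_3) \leq I(X_2;X_1+X_2),
\end{equation*}
which is the desired inequality.

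The only genuine step to check is the Markov relation, and this is a one-line independence argument. No entropy-power or EPI machinery is needed: a chain rule plus data processing suffices, and the proof automatically covers the case of $\RL^d$-valued vectors (as opposed to scalars) with no modification. One mild caveat is that the identities $h(X_1+X_3)$, $h(X_1+X_2)$, $h(X_1)$, and $h(X_1+X_2+X_3)$ must all be finite for the rearrangement to be literally meaningful; this is guaranteed by the standing assumption of the paper that the entropies appearing in any statement exist.
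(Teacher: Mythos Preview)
Your proof is correct. Note, however, that the paper does not actually prove this lemma: it simply quotes it from~\cite{KM:14} as a known result. Your argument---rewriting the inequality as $I(X_2;X_1+X_2+X_3)\leq I(X_2;X_1+X_2)$ and invoking the Markov chain $X_2\to X_1+X_2\to X_1+X_2+X_3$ together with data processing---is precisely the standard proof, and is essentially the argument given in the cited reference. The finiteness caveat you flag is handled exactly as you say, by the paper's standing assumption.
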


A version of this inequality for random variables taking values 
in a discrete group 
can be traced back to~\cite{KV:83}; it extends in fact to 
all locally compact groups~\cite{KM:16}.

The second lemma we need is the fractional 
superadditivity of the entropy power of 
convolutions.
We use $[n]$ to 
denote $\{1, 2,\ldots, n\}$.

\begin{lemma}
{\em \cite{artstein:04,madiman:19}}
\label{lem:ep-fsa}
$(i)$~If $X_1,X_2, \ldots, X_n$ are independent random vectors in $\RL^d$, 
then for any $1\leq k\leq n$:
\begin{equation}\label{mb-epi}
N(X_{1}+X_2+\cdots+X_{n}) \geq \frac{1}{\binom{n-1}{k-1}} 
\sum_{S\subset [n]:|S|=k} N \Big( \sum_{j\in S} X_{j}\Big) .
\end{equation}
$(ii)$ In particular, for $k=n-1$:
\begin{equation}\label{abbn-epi}
%$$
N(X_{1}+X_2+\cdots+X_{n}) \geq \frac{1}{n-1} \sum_{i\in [n]} 
N \Big( \sum_{j\neq i} X_{j}\Big) .
%$$
\end{equation}
\end{lemma}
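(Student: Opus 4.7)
The plan is to prove part (i); part (ii) then follows immediately by taking $k = n-1$ and noting that $\binom{n-1}{n-2} = n-1$. The strategy is the classical Stam-type two-step argument: first lift to a Fisher-information inequality of the same shape, then transport back to entropy power via the heat semigroup.

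In Step 1, with $Y = X_1+\cdots+X_n$ and $Y_S = \sum_{j\in S}X_j$, I would establish the Fisher-information analog
$$\frac{1}{J(Y)} \ge \frac{1}{\binom{n-1}{k-1}} \sum_{|S|=k} \frac{1}{J(Y_S)}.$$
The key tool is the Blachman--Stam identity: for independent $U,V$, the score of $U+V$ equals $\E[\rho_U(U)\mid U+V]$. Applied to the split $Y = Y_S + Y_{S^c}$, this gives $\rho_Y(Y) = \E[\rho_{Y_S}(Y_S)\mid Y]$ for every $k$-subset $S$. Averaging these representations with weights $\lambda_S$ summing to one, and using that conditional expectation is an $L^2$-contraction, yields
$$J(Y) \le \E\Big|\sum_{|S|=k}\lambda_S\,\rho_{Y_S}(Y_S)\Big|^2.$$
Expanding and iterating Blachman's identity a second time allows each $\rho_{Y_S}(Y_S)$ to be expressed via the independent zero-mean vectors $\rho_{X_j}(X_j)$. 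The combinatorics then collapse the quadratic form into a sum in which each $X_j$ is counted once, picking up the factor $\binom{n-1}{k-1}$, the number of $k$-subsets containing $j$; optimizing the $\lambda_S$ by Cauchy--Schwarz delivers the sharp constant.

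In Step 2, I would introduce independent $N(0,I_d)$ vectors $Z_1,\ldots,Z_n$ independent of the $X_j$, and for $t\ge 0$ consider $X_j(t) = X_j + \sqrt{t}\,Z_j$. Applying the Fisher inequality of Step 1 to the $X_j(t)$'s and integrating in $t$ over $[0,\infty)$, de Bruijn's identity $\frac{d}{dt}h(\cdot) = \frac{1}{2}J(\cdot)$, combined with the asymptotics $h(X(t)) = \frac{d}{2}\log(2\pi e t) + o(1)$ as $t\to\infty$, transforms the reciprocal-Fisher inequality into the claimed entropy-power inequality. Operationally, the entropy power $N$ is the unique exponential rescaling of $h$ that makes the two sides balance through this integration, since $N$ of a Gaussian is affine in $t$ under heat flow.

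The main obstacle is the combinatorial bookkeeping in Step 1. A naive use of $J(Y) \le J(Y_S)$ separately for each $S$ only delivers a weaker inequality with denominator $\binom{n}{k}$; getting the sharper $\binom{n-1}{k-1}$ requires the careful Cauchy--Schwarz optimization sketched above, in which each single-variable score contributes exactly once to the final bound. The heat-flow integration in Step 2 is technically standard but delicate, following the template of Stam's original proof of the classical EPI.
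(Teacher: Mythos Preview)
The paper does not give its own proof of this lemma: it is stated with citations, part~(ii) to \cite{artstein:04} (with the later simplifications \cite{madiman:07,tulino:06,shlyakhtenko:07}) and the general part~(i) to \cite{madiman:19}. There is thus no in-paper argument to compare against; any comparison has to be with those references.

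Your overall scheme---a Fisher-information inequality followed by heat-flow integration---is indeed the strategy of those works. But the particular Fisher inequality you write in Step~1 is not the one they prove. The cited arguments establish the \emph{direct} fractional bound
\[
J(Y)\;\le\;\frac{1}{\binom{n-1}{k-1}}\sum_{|S|=k}J(Y_S),
\]
via the projection identity $\rho_Y=\E[\rho_{Y_S}\mid Y]$ together with an ANOVA-type variance-drop lemma: if $\{\beta_S\}$ is a fractional partition of $[n]$ and each $U_S$ is a mean-zero function of $(X_j)_{j\in S}$, then $\E\bigl|\sum_S\beta_S U_S\bigr|^2\le\sum_S\beta_S\,\E|U_S|^2$. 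The weights $\beta_S=1/\binom{n-1}{k-1}$ are fixed by the fractional-partition condition; no optimization over weights is performed. Your proposed reciprocal inequality $\tfrac{1}{J(Y)}\ge\tfrac{1}{\binom{n-1}{k-1}}\sum_{|S|=k}\tfrac{1}{J(Y_S)}$ is a genuinely different (and, for Gaussians, sharper) statement. Your route to it---``iterating Blachman a second time'' to reduce to independent single-variable scores and then optimizing the $\lambda_S$ by Cauchy--Schwarz---glosses over the fact that the scores $\rho_{Y_S}(Y_S)$ for overlapping $S$ are correlated, so the quadratic form does not collapse as you suggest; I am not aware of this reciprocal form being established for general $k$ and non-i.i.d.\ summands. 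Correspondingly, in the cited proofs the Step~2 passage to entropy power proceeds from the direct Fisher inequality (with a suitable scaling along the flow), not from a reciprocal inequality in the classical Stam style you describe.
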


The inequality~\eqref{abbn-epi} was proved 
in~\cite{artstein:04};
simpler proofs of it were later given 
in~\cite{madiman:07,tulino:06,shlyakhtenko:07}. 
The more general inequality~\eqref{mb-epi} 
was obtained in~\cite{madiman:19}. 

We will also need a classical fact about the well-known theorem 
of Cram{\'e}r~\cite{cramer:36}, which states that if the sum of two 
independent random vectors is Gaussian, then the random vectors 
themselves are 
also Gaussian. Let $d_{\rm LP}(\mu,\nu)$ denote 
the L{\'e}vy--Prokhorov metric between 
two probability measures $\mu,\nu$ on $\mathbb{R}^d$, 
$$
d_{\rm LP}(\mu,\nu) := \inf{\Big\{\epsilon>0:\nu(A) \leq \mu(A^{\epsilon}) 
+ \epsilon \text{ and } \mu(A) \leq \nu(A^{\epsilon})+\epsilon, \,
\text{for all Borel sets } A\subset\RL^d \Big\}},$$
where $A^{\epsilon}$ denotes the ``$\epsilon$-blow-up'' of $A$,
$$A^\epsilon
:=\Big\{x \in \mathbb{R}^d: \inf_{y\in A} \|x-y\| < \epsilon\Big\}.$$
When $\mu,\nu$ are the laws of random vectors, $X,Y$, respectively,
we find it convenient to write $d_{\rm LP}(X,Y)=d_{\rm LP}(\mu,\nu)$.
We will only make use of the fact that the L{\'e}vy--Prokhorov distance 
metrizes weak convergence in $\mathbb{R}^d$.

L{\'e}vy~\cite{levy:book} proved weak stability 
of Cram{\'e}r's theorem in dimension~$1$. 
Next, we prove the corresponding stability result
in $\RL^d$,
using a simple modification of 
L{\'e}vy's original argument.

\begin{theorem}[Multidimensional stability of Cram\'{e}r's theorem]
\label{levystabilityd}
Let $Z$ be a Gaussian random vector in $\RL^d$
with invertible covariance matrix $\Sigma$. 
For any $\epsilon>0$ there is a
$\delta=\delta(\Sigma,\epsilon)>0$ such that,
if the independent random vectors $X,Y$ have
$d_{\rm LP}(X+Y,Z)<\delta,$ 
then we also have $d_{\rm LP} (X,U_1) < \epsilon$ 
and $d_{\rm LP}(Y,U_2) < \epsilon$ 
for some Gaussian random vectors $U_1, U_2$.
\end{theorem}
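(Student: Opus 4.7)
The plan is to argue by contradiction, lifting L\'{e}vy's one-dimensional characteristic-function proof to $\mathbb{R}^d$ and invoking the multidimensional Cram\'{e}r theorem at the end. Suppose the conclusion fails, so there exist $\epsilon_0 > 0$ and sequences of independent pairs $(X_n, Y_n)$ with $d_{\rm LP}(X_n + Y_n, Z) \to 0$, yet for each $n$ no Gaussian pair $(U_1, U_2)$ simultaneously satisfies $d_{\rm LP}(X_n, U_1) < \epsilon_0$ and $d_{\rm LP}(Y_n, U_2) < \epsilon_0$.

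The first and most delicate step is to establish tightness after recentering. By L\'{e}vy's continuity theorem the product $\phi_{X_n}(t)\phi_{Y_n}(t) = \phi_{X_n + Y_n}(t)$ converges uniformly on compacts to $\phi_Z(t) = \exp(i\langle m, t\rangle - \tfrac{1}{2} t^T \Sigma t)$, whose modulus, thanks to the invertibility of $\Sigma$, is bounded away from $0$ on compacts. Since $|\phi_{X_n}|, |\phi_{Y_n}| \leq 1$, and their squared moduli are the characteristic functions of the symmetrizations $X_n - X_n'$, $Y_n - Y_n'$, each squared modulus is bounded below by $|\phi_{X_n+Y_n}|^2$, hence eventually bounded below by $1 - C\|t\|^2$ uniformly in $n$ on a neighborhood of the origin. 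By the standard equicontinuity-at-zero criterion for tightness, the symmetrized sequences are tight, and a routine coordinate-wise median argument promotes this to tightness of $X_n$ and $Y_n$ themselves up to deterministic shifts $a_n \in \mathbb{R}^d$. Choosing these shifts so that $\tilde X_n := X_n - a_n$ and $\tilde Y_n := Y_n + a_n$ still sum to $X_n + Y_n$, we obtain two tight sequences whose sum still converges weakly to $Z$.

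The remainder is routine. By Prokhorov, extract a subsequence along which $\tilde X_{n_k} \Rightarrow \tilde X$ and $\tilde Y_{n_k} \Rightarrow \tilde Y$ weakly; by independence the sum converges to $\tilde X + \tilde Y$, which must therefore be distributed as $Z$. Cram\'{e}r's theorem in $\mathbb{R}^d$---obtained from its one-dimensional form via the Cram\'{e}r--Wold device applied to each linear projection---then forces $\tilde X$ and $\tilde Y$ to be Gaussian. Translation invariance of $d_{\rm LP}$ yields $d_{\rm LP}(X_{n_k}, \tilde X + a_{n_k}) = d_{\rm LP}(\tilde X_{n_k}, \tilde X) \to 0$, and similarly for $Y$, so $X_{n_k}$ and $Y_{n_k}$ are both eventually within $\epsilon_0$ of some Gaussian, contradicting the failure assumption. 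The principal obstacle is the tightness step: the bare hypothesis provides no control on the locations of $X_n, Y_n$, and one must argue that their \emph{symmetrizations} are tight using only proximity to the non-degenerate Gaussian $Z$, which is precisely where the invertibility of $\Sigma$ is essential.
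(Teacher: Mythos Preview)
Your proof is correct and follows the same high-level architecture as the paper's: argue by contradiction along sequences, establish tightness of the summands after a median recentering, extract convergent subsequences by Prokhorov, and then invoke the multivariate Cram\'er theorem to contradict the assumed distance from Gaussianity.

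The one substantive difference is in the tightness step. The paper uses a concentration-function argument: since $Z_n = X_n + Y_n$ eventually places mass $>1-\eta$ in some cube of side $2\ell$, conditioning on $Y_n$ shows $X_n$ must do the same, and the zero-median normalization pins the cube to the origin. You instead go through characteristic functions and symmetrization, bounding $|\phi_{X_n}|^2 \geq |\phi_{X_n+Y_n}|^2 \to |\phi_Z|^2$, which is bounded away from zero near the origin precisely because $\Sigma$ is invertible, and then use the equicontinuity-at-zero criterion. Both are standard; the paper's route is more elementary and self-contained, while yours makes the role of nondegeneracy of $\Sigma$ transparent. Your observation that once $\tilde X_n = X_n - a_n$ is tight the companion $\tilde Y_n = Y_n + a_n$ is automatically tight (because $\tilde Y_n = (X_n+Y_n) - \tilde X_n$ and the sum converges) is actually a bit cleaner than the paper's treatment, which handles the two summands less symmetrically.
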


\begin{proof}
Let $\{(X_n,Y_n)\;;\;n\geq 1\}$ be a sequence of pairs of 
random vectors with each $X_n$ independent of~$Y_n$,
and let $Z_n=X_n+Y_n$, $n\geq 1.$
Suppose that 
$\{Z_n\}$ converges weakly to $Z$ as $n\to\infty$,
but there is some $c >0$ such that, for every $n \geq 1$ 
and any Gaussian random vector $U$, $d_{\mathrm{LP}}(X_n,U) \geq c $. 
We will obtain a contradiction.

We assume without loss of generality that the (coordinatewise) 
median of each $X_n$ is zero. 
Let $\eta\in(0,1/2)$ and choose $\ell>0$ 
such that,
$${\mathbb P}\big(Z\in[-\ell,\ell]^d\big)=
\frac{1}{{\mathrm{det}(2 \pi\Sigma)}^{1/2}} 
\int_{[-\ell,\ell]^d}e^{-\frac{1}{2}x^T\Sigma^{-1}x} dx > 1- \eta.
$$
By weak convergence of the $Z_n$, this implies that 
the maximum concentration of $Z_n$ in a cube of side $2\ell$ exceeds 
$1-\eta$  eventually. Therefore, the same must be 
true for the maximum 
concentration cube of side $2\ell$ of $X_n$, since,
eventually,
\begin{align*}
1-\eta
&<
	\sup_{x \in \mathbb{R}^d}
	\mathbb{P}\Big(Z_n \in x + [-\ell,\ell]^d\Big) \\
&=
	\sup_{x \in \mathbb{R}^d}\mathbb{E}
	\left[
	\mathbb{P}\Big(
	X_n \in x - Y_n+[-\ell,\ell]^d\Big|Y_n \Big)
	\right]\\
&\leq 
	\sup_{z \in \mathbb{R}^d}
	\mathbb{P}\Big(X_n \in z + [-\ell,\ell]^d\Big).
\end{align*}
Since $\eta < 1/2$, there must be a $z$ such that
this bound is satisfied and 
$z+[-\ell,\ell]^d$ includes the origin,
otherwise the median of some coordinate would be nonzero. 
Therefore, for any $\epsilon>0$,
$$
{\mathbb P}\big(\|X_n\| > 2\ell\sqrt{d}\big) < \epsilon,
$$
eventually. 
The laws of $\{X_n\}$ thus form a compact set with respect to weak topology. 

By compactness, there is a convergent subsequence $\{X_{n_k}\}$. 
Since by assumption we have that $d_{\rm LP}(X_{n_k},U) > c > 0$ 
for any Gaussian $U$, and since $d_{\rm LP}(\cdot,\cdot)$ metrizes weak 
convergence, the weak limit, say $V$, of that subsequence is not Gaussian.
Finally, since we have convergence 
of $X_n$ and $Z_n$ along this subsequence and $X_n$ is independent 
of $Y_n$, $Y_n$ must also converge along the same subsequence to some 
random vector $W$. But we have thus obtained two independent random 
vectors $V,W,$ whose sum, $Z$, is Gaussian without $V$ being Gaussian. 
This contradicts the multivariate version of Cram{\'e}r's 
theorem~\cite{cramer:36}.
\end{proof}

Let us remark here that the recent work~\cite{mahvari:23} also 
considers a stability theorem of another characterization of the normal 
distribution, namely Bernstein's theorem, which says that if the sum and 
difference of two independent random variables are independent, then the 
random variables are Gaussian. That stability result is 
also generalized to higher dimensions in~\cite{mahvari:23} and it is used 
to characterize capacity regions for certain additive Gaussian noise 
channels.
%  without assuming the existence of capacity achieving distributions. 

Finally, we need a qualitative stability result for the 
EPI that was obtained by Carlen and 
Soffer~\cite{carlen:91} under a finite Fisher information assumption. 
In order to state it in a form more suited 
to our purposes, we need some notation. 

For a random vector $X$ with an absolutely continuous
density function $f$, 
the {\it Fisher information} of $X$ is defined as:
$$
I(X)=I(f):=\int_{\mathbb{R}^d} 
\bigg[\frac{\|\nabla f(x)\|}{f(x)}\bigg]^2 f(x) dx .
$$
Note that the derivative is well defined almost everywhere as a consequence of the absolute continuity of $f$; if this condition is not satisfied, we set $I(X)=\infty$.
We can now define another useful class of densities:
For a given $I_0>0$ and 
a given 
function $\psi:[0,\infty)\ra [0,\infty)$ with $\lim_{R\ra \infty} \psi(R)=0$, 
let:
\begin{equation}
\mathcal{C}_{I_0, \psi}:=\big\{ f\in \mathcal{F}_d: 
I(f)\leq I_0, \psi_f \leq \psi\big\} .
\label{eq:CI0}
\end{equation}
Observe that $\mathcal{C}_{I_0, \psi}$ is a 
subset of ${\mathcal C}_\psi$ and a strict subset 
of the densities with variance at most $\psi(0)$.
The following is a slight rephrasing of~\cite[Theorem~1.2]{carlen:91}.

\begin{proposition}{\em \cite{carlen:91}}
\label{prop:carlen}
For any $I_0>0$, $\epsilon>0$, $a\in(0,1/2]$,
and any $\psi:[0,\infty)\ra [0,\infty)$ with $\lim_{R\ra \infty} \psi(R)=0$, 
there is a $\delta=\delta(\epsilon,I_0,\psi,a) > 0$ such that, if:
\begin{enumerate}
    \item $X_1$ and $X_2$ are independent,
	with densities in $\mathcal{C}_{I_0,\psi}$;
    \item both $X_1$ and $X_2$ have zero mean and identity covariance;
    \item $\lambda\in [a,1-a]$ for some $a\in (0,\frac{1}{2}]$; 
    \item and $h(\sqrt{\lambda}X_1+\sqrt{1-\lambda}X_2)-
	\big[\lambda h(X_1)+(1-\lambda) h(X_2)\big]
	<  \delta,$
\end{enumerate}  
then
$D(X_1) < \epsilon$ and $D(X_2) < \epsilon$.
\end{proposition}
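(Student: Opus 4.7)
The plan is to derive the proposition as a $\delta$-$\epsilon$ reformulation of Theorem~1.2 of Carlen and Soffer~\cite{carlen:91} via a compactness argument. Their original result establishes, under essentially the same Fisher information and tail assumptions, that a vanishing entropy gap in the rescaled EPI forces the marginals to approach the Gaussian in relative entropy; the work needed here is to convert that limit statement into the uniform quantitative form with parameters $(\epsilon, I_0, \psi, a)$, and to check that the hypothesis is stated in a form amenable to their method.

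Concretely, I would argue by contradiction. Suppose no such $\delta$ exists. Then there are sequences of independent pairs $X_1^{(n)}, X_2^{(n)}$ with densities in $\mathcal{C}_{I_0,\psi}$, zero mean and identity covariance, and weights $\lambda_n \in [a,1-a]$, such that
$$h\bigl(\sqrt{\lambda_n}X_1^{(n)}+\sqrt{1-\lambda_n}X_2^{(n)}\bigr) - \bigl[\lambda_n h(X_1^{(n)})+(1-\lambda_n)h(X_2^{(n)})\bigr] \to 0,$$
yet $\max\{D(X_1^{(n)}), D(X_2^{(n)})\} \geq \epsilon$ for all $n$. The tail control $\psi_{X_i^{(n)}}\leq \psi$ with $\psi(R)\to 0$ gives uniform integrability of $\|X_i^{(n)}\|^2$, hence tightness of the laws of $X_i^{(n)}$, and $[a,1-a]$ is compact. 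Passing to subsequences, $X_i^{(n)}\to X_i^{\ast}$ weakly and $\lambda_n \to \lambda^{\ast} \in [a,1-a]$; lower semicontinuity of Fisher information gives $I(X_i^{\ast})\leq I_0$, and the limits have zero mean and identity covariance.

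Next I would pass to the limit in the entropy gap, using the standard fact that $h$ is continuous along weakly convergent sequences with uniformly bounded second moments and uniformly bounded Fisher information (obtained by regularizing with a small Gaussian via de~Bruijn's identity and controlling the error). This forces equality in the rescaled EPI for the pair $(X_1^{\ast}, X_2^{\ast})$, so both $X_i^{\ast}$ must be Gaussian with identity covariance, i.e., $X_i^{\ast}\sim N(0,\mathrm{I}_d)$; in particular $X_i^{(n)}$ converges weakly to the standard Gaussian.

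The main obstacle is upgrading this weak convergence to convergence in relative entropy. This is precisely the content of the Carlen-Soffer machinery: by monotonicity of Fisher information along the Ornstein-Uhlenbeck semigroup and the integrated de~Bruijn identity, one can bound $D(X_i^{(n)})$ in terms of the entropy gap, uniformly over $\mathcal{C}_{I_0,\psi}$. The constraint $\lambda \in [a,1-a]$ enters crucially to keep the semigroup estimates uniform, as $\lambda$ approaching $0$ or $1$ would make the hypothesis trivial and the conclusion false. Once this upgrade is in place, $D(X_i^{(n)})\to 0$ along the extracted subsequence, contradicting $\max\{D(X_1^{(n)}), D(X_2^{(n)})\}\geq \epsilon$ and completing the proof.
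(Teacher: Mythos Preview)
The paper does not prove this proposition; it is stated there as a direct citation of \cite[Theorem~1.2]{carlen:91}, introduced with the phrase ``The following is a slight rephrasing of~[Theorem~1.2]{carlen:91}.'' There is no proof in the paper to compare against.

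Your proposal is therefore not so much a proof of the proposition as a sketch of how Carlen and Soffer themselves establish their Theorem~1.2. The compactness-and-contradiction structure you describe is indeed the architecture of their argument: tightness from the tail bound $\psi$, lower semicontinuity of Fisher information, passage to the limit in the entropy gap, and then the Ornstein--Uhlenbeck/de~Bruijn machinery to upgrade weak convergence to relative-entropy convergence. One minor inaccuracy in your framing: you suggest their result is a limit statement needing conversion to $\delta$-$\epsilon$ form, but their Theorem~1.2 is already stated in the uniform $\delta$-$\epsilon$ form; the ``slight rephrasing'' in the present paper concerns only cosmetic differences in how the hypotheses are packaged, not a genuine reformulation. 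If your goal is simply to justify the proposition for the purposes of this paper, a citation suffices.
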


\newpage

\section{Entropy comparison inequalities}
%%%%%%%%%%%%%%%%%%%%%%%%%%%%%%%%%%%%%%%%%%%%%%%%%%%%%%%%%%%%%%%%%%%%%%%%%
\label{s:main}

In this section we prove the entropy bounds 
discussed 
in Sections~\ref{s:smalllarge} and~\ref{s:gaussian}
of the Introduction.

\subsection{Adding Gaussian noise}
%%%%%%%%%%%%%%%%%%%%%%%%%%%%%%%%%%%%%%%%%%%%%%%%%%%%%%%%%%%%%%%%%%%%%%%%%
\label{sec:comp}

We first examine the $d$-dimensional 
version of one of the main problems
described in the Introduction, namely,
comparing the entropy $h(X+X')$ of the sum
of two i.i.d.\ random vectors with the entropy
$h(X+Z)$ obtained when one of the random vectors
is replaced by a Gaussian with the same covariance.
The question of the relationship between
$h(X+X')$ and $h(X+Z)$ is of course a natural one,
perhaps first explicitly discussed 
by ENT~\cite{eskenazis:18} in the one-dimensional case.
While the validity of the inequality
$h(X+Z)\geq h(X+X')$ suggested in~\cite{eskenazis:18} remains open, 
our first main result below
establishes an approximate version
for $d$-dimensional random vectors,
with an explicit
error term.
For i.i.d.\ random vectors,
this improves the $\frac{1}{2}\log 2$ bound~(\ref{ZE04result})
established by Zamir and Erez~\cite{zamir:04} in dimension~1,
and it is the best bound known to date.

\begin{theorem}[Approximate generalized Gaussian maximum entropy]
\label{thm:ENTqn}
Let $X,X'$ be i.i.d.\ random vectors in $\RL^d$, and $Z$ 
be a Gaussian random vector with the same covariance matrix as $X$.
Then,
%and $\Var(Z) = \Var(X)$
\begin{equation*}
h(X+Z) \geq h(X+\p{X}) -\frac{d}{2}\log{\Big(\frac{2}{\varphi}\Big)}.
% \label{cbound} 
\end{equation*}
where $\varphi=\frac{1+\sqrt{5}}{2}>1$ is the golden ratio.
\end{theorem}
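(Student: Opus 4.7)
The plan is to combine the submodularity of entropy (Lemma~\ref{lem:submod}) with the fractional superadditivity of entropy power (Lemma~\ref{lem:ep-fsa}), both applied to the independent triple $(X,X',Z)$, and then to inject the maximum entropy property of the Gaussian. Writing $v := N(X+Z)$, $w := N(X+X')$, and $\tau := N(Z)$, the theorem is equivalent to the statement $v \geq (\varphi/2)\,w$.

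First, I would apply submodularity with $X_1 = Z$, $X_2 = X$, $X_3 = X'$. Because $X, X'$ are i.i.d., the right-hand side $h(Z+X)+h(Z+X')$ collapses to $2h(X+Z)$, so we obtain $h(X+X'+Z) + h(Z) \leq 2h(X+Z)$, which exponentiates to the upper bound $N(X+X'+Z)\,\tau \leq v^2$. Second, I would apply fractional superadditivity with $n=3$, $k=2$ to the same triple; this gives the lower bound $N(X+X'+Z) \geq \tfrac{1}{2}\bigl[N(X+X') + 2N(X+Z)\bigr] = w/2 + v$. Chaining the two inequalities produces $\tau(v + w/2) \leq v^2$.

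The final ingredient is the maximum entropy property: since $(X+X')/\sqrt{2}$ has the same covariance matrix as $Z$, we have $N(X+X')/2 \leq N(Z)$, i.e., $\tau \geq w/2$. Substituting this into the previous inequality yields $v^2 \geq (w/2)(v + w/2) = wv/2 + w^2/4$, and the positive root of the resulting quadratic in $v$ is exactly $v \geq w(1+\sqrt{5})/4 = (\varphi/2)\,w$. In logarithmic form, this is precisely the claim.

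The main step to get right, in my view, is the orchestration of these three ingredients. In particular, submodularity must be invoked with $Z$ in the role of $X_1$ so that the right-hand side doubles $h(X+Z)$; and the choice $k=2$ in fractional superadditivity is what keeps both $w$ and $v$ present in the lower bound on $N(X+X'+Z)$, setting up a genuinely quadratic constraint on $v$. The golden ratio then arises inevitably from that quadratic, via the defining identity $\varphi^2 = \varphi + 1$.
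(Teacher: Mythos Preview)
Your proof is correct and follows essentially the same approach as the paper: both apply submodularity (Lemma~\ref{lem:submod}) with $X_1=Z$, fractional superadditivity (Lemma~\ref{lem:ep-fsa}) with $n=3$, $k=2$, and the Gaussian maximum entropy bound $N(Z)\geq N(X+X')/2$, then solve the resulting constraint. Your presentation in terms of entropy powers $v,w,\tau$ yields the quadratic $v^2\geq (w/2)(v+w/2)$ directly, whereas the paper parametrizes by $a=h(X+Z)-h(X+X')$ and inverts an auxiliary function $g$, but the two computations are algebraically equivalent.
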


\begin{proof} 
The main idea of the proof is to bound $h(X + Z + \p{X})$ 
from above using Lemma~\ref{lem:submod} and from below 
using Lemma~\ref{lem:ep-fsa}~$(ii)$.
Thus, we obtain,
\begin{align*} 
    2h(X+Z) - h(Z) &\geq h(X+Z+\p{X})
    \\
    &\geq h(X+\p{X}) 
    + \frac{d}{2}\log{\Bigl[1+ 2e^{\frac{2}{d}(h(X+Z)-h(X+\p{X}))}}\Bigr]
    - \frac{d}{2}\log{2},
\end{align*}
or,
\begin{align} 
    h(X+Z) 
&
    \;\;\geq h(X+\p{X}) + \frac{d}{4}\log{\Bigl[1+ 2e^{\frac{2}{d}(h(X+Z)-h(X+\p{X}))}}\Bigr]
    \nonumber\\
&
    \;\;\;\;\;\;
    -\frac{1}{2}\bigl[h(X+\p{X}) - h(Z)\bigr] - \frac{d}{4}\log{2}
    \nonumber\\
&
    \;\; \geq  h(X+\p{X}) + \frac{d}{4}\log{\Bigl[1+ 2e^{\frac{2}{d}(h(X+Z)-h(X+\p{X}))}}\Bigr]
    - \frac{d}{2}\log{2};
    \label{lb3}
\end{align}
cf.~equation~(\ref{eq:intro}) in the Introduction.
Consider the increasing function,
$$g(a) := a - \frac{d}{4}\log{\bigl[1+2e^{\frac{2}{d}a}\bigr]},
\quad a\in\RL.$$
Using the fact that its inverse is, 
$$g^{-1}(y) = \frac{d}{2}\log{\Bigl[e^{\frac{4}{d}y} + e^{\frac{2}{d}y}\sqrt{e^{\frac{4}{d}y} + 1}\Bigr]},$$
and minimizing the quadratic in
the argument of the logarithm, 
from~\eqref{lb3} we obtain,
\begin{align*}
h(X+Z) - h(X+\p{X})
& \geq 
    g^{-1}\Big(-\frac{d}{2}\log{2}\Big)\\
& =
    -\frac{d}{2}\log{\Big(\frac{2}{\varphi}\Big)},
\end{align*}
completing the proof.
\end{proof} 

\subsection{Large doubling}
%%%%%%%%%%%%%%%%%%%%%%%%%%%%%%%%%%%%%%%%%%%%%%%%%%%%%%%%%%%%%%%%%%%%%%%%%
\label{sec:l-doub}

Here we obtain another 
quantitative consequence of large doubling
for the entropy.  Theorem~\ref{thm:ENTqn} gives 
one such result --
indeed, it can be rewritten as:
\begin{equation}
h(X+Z)-h(X) \geq  \frac{d}{2}
\log{\big(\varphi \sigma[X] \big)}.
\label{eq:bigdouble}
\end{equation}
The following theorem gives an alternative
inequality of the form~(\ref{eq:bigdouble}),
with the additional 
advantage that it is sharp.

\begin{theorem}[Large doubling]
\label{prop:doub}
Suppose $X$ is a random vector in $\R^d$ with 
covariance matrix $K_X$, 
and $Z$ is an independent Gaussian random vector with covariance matrix $K_Z$. 
Then,
\begin{equation} 
\label{doublingbound}
h(X+Z)-h(X) \geq  \frac{d}{2}\log{\big(1 + a \sigma[X]\big)},
\end{equation}
where $a^d=\frac{\det(K_Z)}{\det(K_X)}$. Moreover, equality holds 
in \eqref{doublingbound} if $X$ is Gaussian with  $K_X=aK_Z$ for some $a>0$.
\end{theorem}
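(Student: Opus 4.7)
The plan is to combine two elementary ingredients — the basic entropy power inequality (the $n=2$ case of Lemma~\ref{lem:ep-fsa}) and the Gaussian maximum entropy property applied not to $X$ but to the i.i.d.\ sum $X+X'$; no use of submodularity is needed.

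Step one. Applying the EPI to the independent pair $(X,Z)$ gives $N(X+Z)\geq N(X)+N(Z)$. Since $Z$ is Gaussian, its entropy power is computable exactly, $N(Z)=(2\pi e)(\det K_Z)^{1/d}=a\,(2\pi e)(\det K_X)^{1/d}$, by the definition of $a$. Step two. The sum $X+X'$ has covariance $2K_X$, so the maximum entropy property yields $N(X+X')\leq(2\pi e)(\det(2K_X))^{1/d}=2\,(2\pi e)(\det K_X)^{1/d}$. Since $N(X+X')=2\sigma[X]N(X)$ by the definition of the doubling constant, this rearranges to the key bound $(2\pi e)(\det K_X)^{1/d}\geq\sigma[X]\,N(X)$. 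Multiplying through by $a$ gives $N(Z)\geq a\sigma[X]\,N(X)$, and substituting into the EPI bound produces $N(X+Z)\geq N(X)\bigl(1+a\sigma[X]\bigr)$. Taking $(d/2)\log$ of both sides gives \eqref{doublingbound}. For the equality claim, the EPI is tight precisely when $X$ and $Z$ are Gaussians with proportional covariances, while the max entropy step on $X+X'$ is tight precisely when $X+X'$ is Gaussian, which by Cram\'er's theorem forces $X$ itself to be Gaussian. Both conditions hold simultaneously iff $X$ is Gaussian and $K_X$ is a scalar multiple of $K_Z$, which is the stated sharpness condition.

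The only delicate point is the correct choice of object on which to apply maximum entropy. Applying it directly to $X$ would yield only $N(Z)\geq a\,N(X)$ and hence the weaker conclusion $h(X+Z)-h(X)\geq\frac{d}{2}\log(1+a)$, which is sharp only in the purely Gaussian case $\sigma[X]=1$. Applying max entropy at $X+X'$ instead is what brings in the factor $\sigma[X]$, because the slack in that inequality is exactly what the doubling constant measures. A natural-looking alternative that mimics the submodularity-plus-fractional-EPI argument of Theorem~\ref{thm:ENTqn} would in fact be lossier here, since submodularity is strict on Gaussians with proportional covariances and therefore cannot reproduce the stated equality case.
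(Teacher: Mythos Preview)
Your argument is correct and is a genuinely different --- and considerably more elementary --- route than the paper's. The paper realizes $Z$ as the weak limit of $(K_Z^{1/2}K_X^{-1/2})n^{-1/2}\sum_{i=1}^n X_i$, invokes upper semicontinuity of entropy, applies the fractional EPI (Lemma~\ref{lem:ep-fsa}$(i)$ with $k=2$) to the $(n{+}1)$-term sum $\sqrt{n}X_0+\sum_i K_Z^{1/2}K_X^{-1/2}X_i$, drops a term via monotonicity under convolution, and lets $n\to\infty$. You bypass all of that: the two-summand EPI plus maximum entropy applied to $X+X'$ (rather than to $X$) already suffices, because the slack in the latter inequality is precisely $\sigma[X]$. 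Your approach is shorter, avoids limiting arguments entirely, and makes the equality case transparent. The paper's machinery, by contrast, is the same engine that drives the later Theorem~\ref{zamirimprovement}, where no analogue of your max-entropy shortcut is available since the target is $h(X+Z)$ for \emph{non}-identically-distributed $X,Z$; so the CLT/fractional-EPI route earns its keep there even if it is overkill here.

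One small caveat: your ``iff'' in the equality discussion overstates what the theorem asserts. The paper only claims the ``if'' direction, and immediately after the proof it exhibits a non-Gaussian $X$ with $h(X)=-\infty$, $h(X+X')>-\infty$ for which both sides of~\eqref{doublingbound} equal $+\infty$. Your two tightness conditions (EPI and max entropy) do characterize when both \emph{intermediate} steps are equalities, but that is not the same as characterizing equality in the final bound once degenerate cases are admitted. For the theorem as stated, your ``if'' argument is fine; just drop the converse claim.
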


\begin{proof}
By the multivariate 
central limit theorem and the upper semi-continuity of entropy 
with respect to weak convergence, 
\begin{equation} 
\label{upeersemi}
h(X+Z) \geq \limsup_{n\to\infty}{h\left(X_0 + (K_Z^{1/2}K_X^{-1/2})
\frac{1}{\sqrt{n}}\sum_{i=1}^n X_i\right)},
\end{equation}
where $\{X_n\;;\;n\geq 0\}$ are i.i.d.\ copies of $X$. 
For fixed $n \geq 1$, the scaling property of entropy power implies that,
\begin{align}
N\Bigg(
&
	X_0 + (K_Z^{1/2}K_X^{-1/2})\frac{1}{\sqrt{n}}
	\sum_{i=1}^n X_i \Bigg) 
	\nonumber\\
&= 
	\frac{1}{n} N\Bigg(\sqrt{n}X_0 
	+ \sum_{i\in [n]} K_Z^{1/2}K_X^{-1/2} X_i\Bigg)
	\nonumber\\ 
&\geq 
	\frac{1}{n^2}\bigg[\sum_{i\in [n]} 
	N\big(\sqrt{n}X_0 + K_Z^{1/2}K_X^{-1/2} X_i\big) 
	+ \sum_{i\in [n]:i<j} N\big(K_Z^{1/2}K_X^{-1/2} (X_i+X_j)\big) \bigg]
	\label{genEPI}
	\\
&=
	\frac{1}{n^2}\bigg[ n 
	N\big(\sqrt{n}X_0 + K_Z^{1/2}K_X^{-1/2} X_1\big)
	+\binom{n}{2} N\big(K_Z^{1/2}K_X^{-1/2}(X_1+X_2)\big) \bigg] 
	\nonumber\\ 
&\geq 
	\frac{1}{n} N(\sqrt{n}X_0) 
	+ \Big(\frac{n-1}{2n}\Big) N(K_Z^{1/2}K_X^{-1/2}(X_1+X_2))
	\label{conv}
	\\ 
&= 
	N(X_0) + \Big(\frac{n-1}{2n}\Big) a N(X_1 + X_2) ,
	\label{last}
\end{align}
where in~\eqref{genEPI} we used
Lemma~\ref{lem:ep-fsa}~$(i)$ with $k=2$,
and~\eqref{conv} follows 
from the fact that entropy does not decrease on convolution. 
Combining~\eqref{last} and~\eqref{upeersemi} yields the claimed 
bound~(\ref{doublingbound}). 

The sufficient condition for equality follows by direct computation.
\end{proof}

Since we have equality in~(\ref{doublingbound}) if $X$ is Gaussian,
and because of the use of the EPI in the proof, it might be tempting 
to conjecture that the corresponding `only if' statement is also true, 
that is, that equality is achieved only if $X$ is Gaussian with 
$K_X$ proportional to $K_Z$.
However, this is not the case, as the following example illustrates:
Let $X$ be a random variable 
such that $h(X) = -\infty$, but $h(X+\p{X}) > -\infty$;
see, e.g.,~\cite{bobkov:15} for an explicit such construction. 
Then $X$ is necessarily not Gaussian, while
both sides of~\eqref{doublingbound} 
are~$+\infty$. 
As we are not aware of a corresponding
example with $h(X)>-\infty$, we
do not know whether an `only if' statement 
for the case of equality may be 
proved under the assumption $h(X)>-\infty$ or, more generally,
under other additional assumptions on $X$.

% \begin{remark}
% Since  $1+\sigma[X]^{-1} \leq 2$, the error term in \eqref{doublingbound} is always non-positive and zero if and only if $X$ is Gaussian. In particular, Gaussian $X,\p{X}$ achieve equality in \eqref{doublingbound}. 
% \end{remark}

%Therefore Proposition~\ref{prop:doub} is especially of interest when the doubling constant is small. 

%The proof also shows that \eqref{doublingbound} holds even if $\sigma[X]$ is infinite. 

% By a slight modification of the proof and with the same definition of the doubling, it is possible to generalize the previous proposition for random vectors and with the Gaussian having an arbitrary covariance matrix:

% \begin{proposition}
% Suppose $X,\p{X}$ are i.i.d.\ random vectors in $\mathbb{R}^d$ and $Z$ is 
% an independent normal random vector with identical covariance as $X$. 
% Then, for any $\Sigma \in \mathbb{R}^{d \times d}$:
% \begin{equation} \label{doublingbounddimension}
% h(X+\Sigma Z) \geq h(X+\p{X}) -\frac{d}{2}\log{2} 
% + \frac{d}{2}\log{\Biggl[\det(\Sigma)^2 + \frac{1}{\sigma[X]}\Biggr]}.
% \end{equation}
% \end{proposition}
% \qed

\subsection{Small doubling}
%%%%%%%%%%%%%%%%%%%%%%%%%%%%%%%%%%%%%%%%%%%%%%%%%%%%%%%%%%%%%%%%%%%%%%%%%
\label{sec:s-doub}

As observed in the Introduction,
$D(X)=h(Z)-h(X)\geq \frac{d}{2} \log\sigma[X]$. 
Therefore, $\sigma[X]$ being
bounded away from~1 implies that $X$ is bounded away 
from being Gaussian in the sense of relative entropy. 
It is a natural to ask if the reverse holds -- 
does $\sigma[X]$ close to~1 imply 
that $X$ is close to Gaussian? 
Courtade, Fathi, and 
Pananjady~\cite{courtade:18} answer
this stability question by presenting a 
counterexample:
They construct
a sequence $\{X_n\}$ of random variables with 
$\sigma[X_n]\to 1$ as $n\to\infty$, but whose 
quadratic Wasserstein distance from
the Gaussian is bounded 
away from~0.

Nevertheless, stability can hold for restricted
classes of distributions.  The qualitative 
stability result of~\cite{carlen:91} under a bounded 
Fisher information condition was the first such result.
A sharp, quantitative stability result was 
obtained in~\cite{KM:14} under the finite Poincar{\'e} 
constant assumption.
And more recently, quantitative stability results have 
been obtained under regularity assumptions such as 
log-concavity~\cite{courtade:18,eldan:20}.

Our main observation in this section is that, although stability of 
the lower bound on doubling may fail for strong distances, one can 
obtain a simpler qualitative characterisation of random variables 
with small doubling in terms of weak convergence, 
by combining observations from~\cite{carlen:91} with 
the stability of Cram\'{e}r's theorem due to L{\'e}vy. 
Recall the definitions of the classes of densities
$\mathcal{C}_{\psi}$ and $\mathcal{C}_{I_0, \psi}$ 
in~(\ref{eq:Cpsi}) and~(\ref{eq:CI0}), respectively.

\begin{theorem}[Small doubling implies near-Gaussianity]
\label{smalldoublinglevy2}
Let $X_1,X_2$ be independent random vectors in $\RL^d$ with laws 
$\nu_1,\nu_2$, respectively, with densities 
in $\mathcal{C}_{\psi}$
for some $\psi:[0,\infty)\ra [0,\infty)$ with 
$\lim_{R\ra \infty} \psi(R)=0$.
Suppose $X_1,X_2$ have zero mean,
the same invertible covariance matrix $\Sigma$,
and $h(X_1+X_2) > -\infty$.  Then for 
each $\epsilon > 0$, there is a $\delta = \delta(\epsilon,\psi, \Sigma,d)>0$
such that,
\begin{equation*} 
% \label{smalldeficit}
h\bigg(\frac{X_1 + X_2}{\sqrt{2}}\bigg) -\frac{h(X_1)+h(X_2)}{2} <  \delta,
\end{equation*}
implies that  $d_{\rm LP}({\nu_1}, \gamma_{\mu_1,\Sigma_1}) 
< \epsilon$ and $d_{\rm LP}({\nu_2}, \gamma_{\mu_2,\Sigma_2}) < \epsilon$
for some for some $\mu_1,\Sigma_1,\mu_2,\Sigma_2$.
\end{theorem}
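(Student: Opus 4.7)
The plan is to combine the Carlen--Soffer stability estimate (Proposition \ref{prop:carlen}) with the Lévy--Cramér stability theorem (Theorem \ref{levystabilityd}) via an intermediate Gaussian smoothing step that supplies the finite Fisher information hypothesis that $X_1,X_2$ need not themselves satisfy. After an affine change of variable that leaves the EPI gap invariant, I may assume $\Sigma = I_d$. For a parameter $t > 0$ to be chosen at the end, introduce independent $W_1, W_2 \sim N(0, I_d)$ independent of $(X_1, X_2)$, and set $Y_i := X_i + \sqrt{t}\,W_i$. Stam's inequality gives $I(Y_i) \leq d/t < \infty$, and a routine tail estimate from $\|Y_i\|^2 \leq 2\|X_i\|^2 + 2t\|W_i\|^2$ produces a function $\psi_t : [0,\infty) \to [0,\infty)$ depending only on $\psi, t, d$, with $\psi_t(R) \to 0$, such that $\psi_{Y_i} \leq \psi_t$ uniformly in $X_i$ with density in $\mathcal{C}_\psi$.

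The key analytic step is to show that the EPI gap is non-increasing under joint Gaussian smoothing, so that the gap for $(Y_1, Y_2)$ is bounded by the gap for $(X_1, X_2)$, and hence by $\delta$. Setting $U := (W_1 + W_2)/\sqrt{2} \sim N(0, I_d)$ and $V_i := X_i + \sqrt{t}\,W_i$, define
\[
\phi(t) := h\Bigl(\tfrac{X_1+X_2}{\sqrt{2}} + \sqrt{t}\,U\Bigr) - \tfrac{1}{2}\bigl[h(V_1) + h(V_2)\bigr].
\]
By de Bruijn's identity, $\phi'(t) = \tfrac{1}{2} I((V_1+V_2)/\sqrt{2}) - \tfrac{1}{4}[I(V_1) + I(V_2)]$. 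Stam's inequality combined with AM--HM yields $I((V_1+V_2)/\sqrt{2}) = 2I(V_1+V_2) \leq 2I(V_1)I(V_2)/(I(V_1)+I(V_2)) \leq (I(V_1)+I(V_2))/2$, whence $\phi'(t) \leq 0$. Since $\phi(0)$ is the original EPI gap, the analogous gap for $(Y_1,Y_2)$ is also at most $\delta$.

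Now the pieces fit together. After rescaling $\tilde Y_i := Y_i / \sqrt{1+t}$ to identity covariance (which preserves the EPI gap), Proposition \ref{prop:carlen} with $\lambda = 1/2$, $a = 1/2$, $I_0 = (1+t)d/t$, and tail profile $\psi_t$ yields a threshold $\delta_{\mathrm{CS}}$ below which $D(\tilde Y_i) < \epsilon'$, and Pinsker's inequality then gives $d_{\mathrm{LP}}(Y_i, g_i) \leq \sqrt{\epsilon'/2}$ where $g_i$ is the Gaussian with the same mean and covariance as $Y_i$. Since $Y_i = X_i + \sqrt{t}\,W_i$ and $\sqrt{t}\,W_i$ is itself Gaussian, Theorem \ref{levystabilityd} converts this closeness of $Y_i$ to a Gaussian into $d_{\mathrm{LP}}(X_i, U_1^{(i)}) < \epsilon$ for some Gaussian $U_1^{(i)}$, as required. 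The parameters are fixed in the order $\epsilon \to \epsilon' \to t \to \delta$ via the constants in Theorem \ref{levystabilityd} and Proposition \ref{prop:carlen}, all depending only on $\epsilon, \psi, \Sigma, d$. The main obstacle is the quantitative bookkeeping along this chain -- in particular, establishing the uniform tail bound $\psi_{Y_i}\leq \psi_t$ and the Fisher-information monotonicity $\phi'\leq 0$ with enough care to thread the Carlen--Soffer constants through to an explicit final $\delta$.
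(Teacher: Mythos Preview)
Your proposal is correct and follows essentially the same route as the paper: Gaussian smoothing to obtain finite Fisher information, monotonicity of the EPI gap under this smoothing, Carlen--Soffer (Proposition~\ref{prop:carlen}), Pinsker, and then L\'evy--Cram\'er stability (Theorem~\ref{levystabilityd}). The only differences are cosmetic: the paper fixes the smoothing at $t=1$ (writing $\tilde X_i=(X_i+Z_i)/\sqrt2$) and cites the EPI-gap monotonicity directly from~\cite[Eq.~(1.30)]{carlen:91} rather than re-deriving it via de~Bruijn and Stam as you do; your free parameter $t$ is unnecessary but harmless, since any fixed $t>0$ works and need not be tuned to~$\epsilon$.
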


% \begin{theorem} \label{smalldoublinglevy}
% Suppose $X_1,X_2$ are independent random variables with zero mean and finite variances, such that, for $i=1,2$,
% \begin{equation} \label{psicondition}
% \psi_{X_i}(R) \leq \psi (R) 
% \quad \text{and } \quad \Var{(X_i)} \geq v ,
% \end{equation}
% for some given positive, continuous and strictly decreasing function $\psi$ on $(0,\infty)$ with 
% $\lim_{R\to \infty}{\psi(R)} = 0$ 
% and a given constant $v>0.$ Suppose, further, that $h(X_1+X_2) > -\infty.$  Let $F_1, F_2$ be the distribution functions of $X_1$ and $X_2$ respectively. Then for each $\epsilon > 0$, there is a $\delta = \delta(\epsilon,\psi,v) > 0$ such that
% \begin{equation}
% h(X_1 + X_2) -\frac{h(X_1)+h(X_2)}{2} < \frac{1}{2}\log{2} + \delta \quad {\rm implies} \quad L(F_1, \Phi_{\mu_1,\sigma_1^2}) < \epsilon \quad {\rm and} \quad L(F_2, \Phi_{\mu_2,\sigma_2^2}) < \epsilon
% \end{equation}
% for some $\mu_1,\sigma_1,\mu_2,\sigma_2 \in \mathbb{R}.$
% \end{theorem}

\begin{proof}
Fix $\epsilon > 0$.
First,  assume 
that $\Sigma = \mathrm{I}_d$. 

For $i = 1,2$ 
let $\tilde{X}_i = \frac{X_i + Z_i}{\sqrt{2}},$ where $Z_1,Z_2$ 
are independent standard Gaussian random vectors.
We have the elementary estimate~\cite[Eq.~(1.38)]{carlen:91},
\begin{equation} \label{elementaryestimate}
\psi_{\tilde{X}_i}(R) \leq 2\psi_{X_i}\bigl(R/2\bigr) 
+ 2\psi_{Z_i}\bigl(R/2\bigr).
\end{equation}
Furthermore, 
\begin{equation}
I(\tilde{X}_i) \leq 2I(Z_i) = 2d,
\end{equation}
and by~\cite[Eq.~(1.30)]{carlen:91} we also have,
\begin{equation} \label{smoothdoubling}
h(\tilde{X}_1 + \tilde{X}_2) -\frac{h(\tilde{X}_1)+h(\tilde{X}_2)}{2} \leq h(X_1 + X_2) -\frac{h(X_1)+h(X_2)}{2}.
\end{equation}
Let $\epsilon_1 > 0$ to be chosen later. 
In view of~\eqref{elementaryestimate}--\eqref{smoothdoubling}, 
by Proposition~\ref{prop:carlen}
% \cite[Theorem 1.2]{carlen:91} 
with $\lambda = 1/2$, there is a $\delta(\epsilon_1,\psi,d) > 0$ such that,
\begin{equation} \label{scaleinv}
h(X_1 + X_2) -\frac{h(X_1)+h(X_2)}{2} < \frac{1}{2}\log{2} + \delta \quad {\rm implies} \quad D(X_i+Z_i) = D(\tilde{X}_i) < \epsilon_1,
\end{equation}
for $i = 1,2.$ By Pinsker's inequality, 
$\|\nu_i * \gamma_{0,\mathrm{I}_d} 
-  \gamma_{0,2\mathrm{I}_d}\|_{\rm TV} 
\leq \sqrt{\frac{\epsilon_1}{2}}$,
where `$*$' denotes the convolution of
probability measures. Thus,
$$
d_{\rm LP}(\nu_1* \gamma_{0,\mathrm{I}_d},  \gamma_{0,2\mathrm{I}_d}) 
\leq \|\nu_1 * \gamma_{0,\mathrm{I}_d} -  \gamma_{0,2\mathrm{I}_d}\|_{\rm TV} 
\leq \sqrt{\frac{\epsilon_1}{2}},
$$
and,
$$d_{\rm LP}(\nu_2 * \gamma_{0,\mathrm{I}_d},  \gamma_{0,2\mathrm{I}_d}) 
\leq \|\nu_2 * \gamma_{0,\mathrm{I}_d} 
- \gamma_{0,2\mathrm{I}_d}\|_{\rm TV}\leq \sqrt{\frac{\epsilon_1}{2}}.
$$
By Theorem \ref{levystabilityd}, 
$$
d_{\rm LP}(\nu_1, \gamma_{\mu_1,\Sigma_1}) < \epsilon \quad {\rm and} \quad 
d_{\rm LP}(\nu_2, \gamma_{\mu_2,\Sigma_2}) < \epsilon,
$$
for some $\mu_1,\Sigma_1,\mu_2,\Sigma_2$ and an appropriate choice 
of $\epsilon_1$. 
Since $\epsilon > 0$ was arbitrary, this gives the claimed
result.

Finally, to see that the assumption
$\Sigma = \mathrm{I}_d$
incurs no loss of generality,
note that otherwise the 
assumption of the theorem holds for the 
random variables $\Sigma^{-\frac{1}{2}}X_i$ and~\eqref{scaleinv} 
is scale-invariant. 
\end{proof}

\noindent
{\bf Remarks.}
\begin{enumerate}
\item
For identically distributed random vectors, the above result provides a characterization of the distribution of random vectors with small doubling. For random variables with the same variance (but not necessarily independent), it may also be seen as a 
characterization of random variables with {\em entropic Ruzsa 
distance}~\cite{KM:14,tao:10} close to $\frac{1}{2}\log{2}.$
\item
The problem of stability in Cram{\'e}r's theorem is well studied for $d=1$. 
Under the assumption that the random variables have median zero, 
stability results for the Kolmogorov distance are 
known~\cite{sapogov:55} with no moment 
assumptions. Furthermore, assuming $\Var{(X_1)} = \Var{(X_2)}$ sharp rates 
were obtained recently~\cite{bobkov:13b,bobkov:16,bobkov:16b}, 
improving earlier results of Sapogov~\cite{sapogov:55}. The book 
of Linnik and Ostrovski{\u\i}~\cite{linnik:book} contains further discussion 
and references. 
% Since we use the qualitative stability result of Carlen 
% and Soffer, our proof of Proposition \ref{smalldoublinglevy} does not 
% yield an explicit rate. Therefore, we chose to use L{\'e}vy's stability 
% result, the proof of which can easily be extended to a general dimension.
\end{enumerate}

\newpage

\section{Channel coding bounds}
%%%%%%%%%%%%%%%%%%%%%%%%%%%%%%%%%%%%%%%%%%%%%%%%%%%%%%%%%%%%%%%%%%%%%%%%%
\label{sec:channel}

\subsection{Robustness of Gaussian inputs}
%%%%%%%%%%%%%%%%%%%%%%%%%%%%%%%%%%%%%%%%%%%%%%%%%%%%%%%%%%%%%%%%%%%%%%%%%
\label{s:robust}

The channel coding theorem for memoryless channels states
that a channel's {\em capacity}, defined as the fastest possible 
reliable
communication rate, is equal to the maximum of 
the mutual information $I(X;Y)=h(X)-h(X|Y)$ 
between the channel input $X$ and the channel output $Y$,
where the
maximum is over all appropriate input distributions. 
Moreover, the {\em capacity-achieving distribution} 
determines the distribution of optimal random codebooks 
and it informs the construction 
of good codes~\cite{shamai:97,han:93,polyanskiy:14}.
The capacity-achieving distribution of the
the Additive White Gaussian Noise (AWGN) channel 
with an average power constraint
is itself Gaussian~\cite{shannon:48},
but in general the determination of capacity-achieving 
distributions is a fraught task;
see, e.g.,~\cite{tchamkerten:04,huang:07,dytso:20},
for discussions of the
associated difficulties.

Gaussian codebooks occupy a special place
in the study of additive-noise channels.
For example, 
i.i.d.\ Gaussian inputs and i.i.d.\ Gaussian noise 
are minimax optimal choices in the channel coding game~\cite{cover:book2}.
Moreover, for any noise distribution with power~$N$,
Gaussian codebooks with power $P$ can achieve all rates 
up to the Gaussian channel
capacity, $\frac{1}{2}\log(1+\frac{P}{N})$~\cite{lapidoth:96,scarlett:17}.

The robustness of Gaussian codebooks was further
examined by Zamir and Erez~\cite{zamir:04},
who showed that,
for {\em any} memoryless, additive-noise channel 
with an average power constraint,
Gaussian inputs incur a rate loss no greater
that half a bit compared to capacity.
Specifically, consider a memoryless
additive-noise channel
with input
$X\in\RL$, noise $Z\in\RL$, and output
$Y=X+Z$.
Let $N=\Var(Z)>0$ denote
the noise power and $P>0$ the input power constraint,
that is, suppose that all potential input codewords
$(x_1,x_2,\ldots,x_n)\in\RL^n$ are constrained
as,
$$
\frac{1}{n}\sum_{i=1}^n x_i^2 \leq P.
$$
Let $C(Z;P)$ be the capacity of this channel:
$$C(Z;P)
=
    \sup_{X: {\mathbb E}(X^2)\leq P} I(X;X+Z).
$$

First we note that Theorem~\ref{prop:doub}
established earlier, immediately implies the following 
simple bound on the performance of Gaussian inputs
for an arbitrary power-constrained additive-noise channel.

\begin{corollary}[Quantitative minimax optimality of Gaussian noise]
\label{cor:wc}
Let $X^*\sim N(0,P)$. The capacity
$C(Z;P)$ of the channel with additive noise $Z$ 
with power $N=\Var(Z)$
and average power constraint $P$, satisfies,
\begin{equation}
C(Z;P)\geq I(X^*;X^*+Z)\geq \frac{1}{2}\log \big[1+\snr \,\sigma[Z]\big] ,
\label{eq:worst}
\end{equation}
where $\snr:=\frac{P}{N}$ denotes the signal-to-noise-ratio 
and $\sigma[Z]$ is the doubling constant of the noise $Z$.
\end{corollary}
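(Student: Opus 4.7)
The plan is to recognize that the claimed bound is essentially a restatement of Theorem~\ref{prop:doub} with the roles of the Gaussian and non-Gaussian variables swapped relative to how they were used earlier. Concretely, the first inequality $C(Z;P) \geq I(X^*; X^* + Z)$ is immediate from the definition of $C(Z;P)$ as a supremum over admissible inputs, since $X^* \sim N(0,P)$ satisfies the average power constraint. The content is therefore entirely in the second inequality.

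For that, I would expand the mutual information in the standard way: since the channel is additive with noise $Z$ independent of the input $X^*$, we have $h(X^* + Z \mid X^*) = h(Z)$, hence
\begin{equation*}
I(X^*; X^*+Z) \;=\; h(X^*+Z) - h(Z).
\end{equation*}
So it suffices to establish $h(X^*+Z) - h(Z) \geq \tfrac{1}{2}\log(1 + \snr\,\sigma[Z])$. This is exactly the conclusion of Theorem~\ref{prop:doub} applied in dimension $d=1$ to the pair $(Z, X^*)$ in the roles of $(X, Z)$ of that theorem: the "non-Gaussian" variable is now the noise $Z$, and the independent Gaussian being added is $X^*\sim N(0,P)$. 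Theorem~\ref{prop:doub} then yields $h(Z + X^*) - h(Z) \geq \tfrac{1}{2}\log(1 + a\,\sigma[Z])$ with $a = \det(K_{X^*})/\det(K_Z) = P/N = \snr$ in one dimension.

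The only thing to watch is the bookkeeping of the covariance ratio $a$: in Theorem~\ref{prop:doub}, $a$ is the ratio of the determinants of the covariance of the Gaussian summand to that of the non-Gaussian one, raised to the $1/d$ power. Here $d=1$, the Gaussian summand is $X^*$ with variance $P$, and the non-Gaussian summand is $Z$ with variance $N$, so $a = P/N = \snr$ exactly. Substituting gives the right-hand side of \eqref{eq:worst}, and the proof is complete. No step is a genuine obstacle; the only small subtlety is making explicit that Theorem~\ref{prop:doub} is symmetric enough in notation to be applied with the Gaussian playing the role of the added noise rather than of the reference covariance-matched variable, as was the case in its original motivating statement.
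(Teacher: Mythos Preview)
Your proposal is correct and matches the paper's approach exactly: the paper states that Corollary~\ref{cor:wc} ``immediately'' follows from Theorem~\ref{prop:doub}, and you have spelled out precisely how --- apply Theorem~\ref{prop:doub} in dimension $d=1$ with the channel noise $Z$ in the role of $X$ and the Gaussian input $X^*\sim N(0,P)$ in the role of the Gaussian summand, so that $a=P/N=\snr$, and observe $I(X^*;X^*+Z)=h(X^*+Z)-h(Z)$.
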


Since $\sigma[Z]\geq 1$ with equality if and only if $Z$ is Gaussian,
from (\ref{eq:worst}) we immediately recover the worst-case property 
of Gaussian noise; see also~\cite{ihara:78,diggavi:01} for 
further developments. 

Zamir and Erez's complementary result 
on the robustness of Gaussian inputs reads as follows: 

\begin{theorem}
{\em \cite{zamir:04}}
\label{halfabit}
Let $X^*\sim N(0,P)$. The capacity
$C(Z;P)$ of the channel with additive noise $Z$
and average power constraint $P$, satisfies:
$$I(X^*; X^*+Z)\geq C(Z;P)- \frac{1}{2}\log 2 .$$
\end{theorem}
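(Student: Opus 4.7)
The plan is to reduce Theorem~\ref{halfabit} to the pointwise entropy comparison
\begin{equation*}
h(X+Z) \;\leq\; h(X^*+Z) + \frac{1}{2}\log 2
\end{equation*}
for every $X$ independent of $Z$ with $\mathbb{E}[X^2]\leq P$. Taking the supremum over admissible $X$ and subtracting the common term $h(Z)$ from both sides converts this into $C(Z;P)\leq I(X^*;X^*+Z)+\frac{1}{2}\log 2$, which is the desired statement.

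For the pointwise bound I would first use the maximum-entropy property of the Gaussian under a variance constraint: since $\mathrm{Var}(X+Z)\leq P+N$,
\begin{equation*}
h(X+Z) \;\leq\; \frac{1}{2}\log 2\pi e(P+N).
\end{equation*}
Next I would lower-bound $h(X^*+Z)$ via the entropy power inequality (the $n=2$ instance of Lemma~\ref{lem:ep-fsa}):
\begin{equation*}
N(X^*+Z) \;\geq\; N(X^*) + N(Z) \;=\; P + N(Z),
\end{equation*}
so $h(X^*+Z)\geq\frac{1}{2}\log 2\pi e(P+N(Z))$. Subtracting gives
\begin{equation*}
h(X+Z) - h(X^*+Z) \;\leq\; \frac{1}{2}\log\frac{P+N}{P+N(Z)},
\end{equation*}
which is at most $\frac{1}{2}\log 2$ precisely when $P+2N(Z)\geq N$. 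This already handles any moderate-SNR regime and any noise that is not severely peakier than a Gaussian.

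The delicate case is the ``low-SNR, peaky-noise'' corner in which both $P\ll N$ and $N(Z)\ll N$; there the EPI lower bound on $h(X^*+Z)$ is too loose and the argument above is insufficient. To close this gap I would refine the upper bound on $h(X+Z)$ by exploiting that $X+Z$ must inherit the non-Gaussianity of $Z$. The natural device is a contraction of the form $D\bigl(X+Z\,\|\,G_{\mathrm{Var}(X)+N}\bigr)\geq \alpha\bigl(\mathrm{Var}(X)/N\bigr)\,D(Z\,\|\,G_N)$; combined with the identity $h(U)=\frac{1}{2}\log 2\pi e\,\mathrm{Var}(U)-D\bigl(U\,\|\,G_{\mathrm{Var}(U)}\bigr)$, this would subtract a matching $\frac{1}{2}\log(N/N(Z))$ term from the max-entropy bound and cancel the slack in the EPI bound on $h(X^*+Z)$. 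The main obstacle I anticipate is proving the precise form of this contraction when neither convolvand is Gaussian: the standard Ornstein--Uhlenbeck contraction applies directly only when one is Gaussian, so a separate argument would be required. A fallback would be an Ozarow--Wyner-style decomposition of the capacity-achieving input into a coarse peak-cancelling layer and a Gaussian-like refinement layer, budgeting at most $\frac{1}{2}\log 2$ bits of loss to the coarse layer.
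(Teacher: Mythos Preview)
The paper does not prove Theorem~\ref{halfabit}; it is quoted from \cite{zamir:04} and used as background for Theorem~\ref{zamirimprovement}. So there is no in-paper proof to match, but your proposal can still be assessed on its merits, and it has a real gap.

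Your first two steps are fine and give
\[
h(X+Z)-h(X^*+Z)\ \le\ \tfrac12\log\frac{P+N}{\,P+e^{2h(Z)}/(2\pi e)\,},
\]
which, as you correctly observe, exceeds $\tfrac12\log 2$ whenever $P+2e^{2h(Z)}/(2\pi e)<N$. Your proposed repair, a contraction of the form $D(X+Z\|G_{\Var(X)+N})\ge\alpha(\Var(X)/N)\,D(Z\|G_N)$ valid for \emph{arbitrary} $X$, is not a known inequality; the Ornstein--Uhlenbeck contraction you invoke applies when the added variable is Gaussian, and here the capacity-approaching $X$ is precisely the one you cannot assume anything about. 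The Ozarow--Wyner ``fallback'' is only a sketch of a strategy, not an argument. So the proof is incomplete exactly in the regime you flagged.

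The missing idea is not to compare $h(X+Z)$ and $h(X^*+Z)$ directly via EPI, but to \emph{inflate} by an independent $X^*$ and use submodularity (Lemma~\ref{lem:submod}). For any admissible $X$,
\[
h(X+Z)\ \le\ h(X+Z+X^*)\ \le\ h(X^*+Z)+\bigl[h(X^*+X)-h(X^*)\bigr]\ \le\ h(X^*+Z)+\tfrac12\log 2,
\]
where the middle inequality is Lemma~\ref{lem:submod} with $(X_1,X_2,X_3)=(X^*,Z,X)$, and the last uses $h(X^*+X)\le\tfrac12\log\bigl(2\pi e\cdot 2P\bigr)$ together with $h(X^*)=\tfrac12\log(2\pi e P)$. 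Subtracting $h(Z)$ and taking the supremum over $X$ yields the theorem uniformly in $\snr$ and in the non-Gaussianity of $Z$, with no case split. This is the tool the paper itself highlights; your EPI-plus-max-entropy route cannot close the low-SNR corner without an additional, unproven ingredient.
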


As this additive bound is clearly only nontrivial
if the capacity $C(Z;P)>\frac{1}{2}\log 2$,
Zamir and Erez~\cite{zamir:04}
raised the question of whether there exist
a constant $\alpha>0$ and
an input $X'$, possibly non-Gaussian, 
for which a multiplicative bound of the form,
\begin{equation} 
\label{zamirquestion}
I(X';X'+Z) > \alpha C(Z;P),
\end{equation}
might hold,
for any noise distribution for $Z$.
We offer a partial answer to this question
in Theorem~\ref{zamirimprovement}.
Generalizing the proof method of our 
earlier Theorem~\ref{prop:doub}, we show 
that a Gaussian input $X'=X^*$ indeed satisfies~(\ref{zamirquestion}),
but with a constant $\alpha$ 
which depends on the signal-to-noise ratio (SNR).
Theorem~\ref{zamirimprovement},
proved Section~\ref{s:proof},
gives meaningful information on the rate achieved by
Gaussian inputs at arbitrarily low SNR.

\begin{theorem}[Robustness of Gaussian inputs]
\label{zamirimprovement}
Let $X^*\sim N(0,P)$. The capacity
$C(Z;P)$ of the channel with additive noise $Z$ 
and $\snr=\frac{P}{N}$, where $P$ is the input
average power constraint and
$N=\Var(Z)$ is the noise power,
satisfies:
$$I(X^*;X^*+Z) \geq \frac{\snr}{3\snr+2} C(Z;P).$$
\end{theorem}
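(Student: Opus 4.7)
The plan is to combine two elementary estimates on $I(X^*;X^*+Z)$ and $C(Z;P)$ with a short one-dimensional calculus check. The key insight is to parametrize both sides of the desired inequality by the relative entropy $D(Z):=h(Z^*)-h(Z)\ge 0$, where $Z^*\sim N(0,N)$; this quantifies how far $Z$ is from Gaussian and satisfies $N(Z)=Ne^{-2D(Z)}$.

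First, I would apply the entropy power inequality to the independent sum $X^*+Z$ (the simplest instance of Lemma~\ref{lem:ep-fsa}$(i)$, and the tool that also underlies the proof of Theorem~\ref{prop:doub}): since $N(X^*)=P$,
$$
N(X^*+Z)\;\geq\;N(X^*)+N(Z)\;=\;P+Ne^{-2D(Z)},
$$
which yields
$$
I(X^*;X^*+Z)\;=\;\tfrac{1}{2}\log\frac{N(X^*+Z)}{N(Z)}\;\geq\;\tfrac{1}{2}\log\bigl(1+\snr\,e^{2D(Z)}\bigr).
$$
Next, I would upper bound $C(Z;P)$ by the Gaussian maximum entropy property: for any $X$ with $\mathbb{E}X^2\le P$, $h(X+Z)\le\tfrac{1}{2}\log(2\pi e(P+N))$, and subtracting $h(Z)=\tfrac{1}{2}\log(2\pi eN)-D(Z)$ gives
$$
C(Z;P)\;\leq\;\tfrac{1}{2}\log(1+\snr)+D(Z).
$$

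Substituting these two estimates, the theorem reduces to the single-variable inequality
$$
\log\!\bigl(1+\snr\,e^{2d}\bigr)\;\geq\;\frac{\snr}{3\snr+2}\bigl[\log(1+\snr)+2d\bigr],\qquad d\geq 0,\ \snr>0.
$$
I would verify this by letting $f(d)$ denote the difference of the two sides and checking that (i) $f(0)=\log(1+\snr)\cdot\frac{2\snr+2}{3\snr+2}\geq 0$, and (ii) $f'(d)=\frac{2\snr\,e^{2d}}{1+\snr\,e^{2d}}-\frac{2\snr}{3\snr+2}\geq 0$ for $d\ge 0$. A short computation shows that condition (ii) is equivalent to $(2\snr+2)e^{2d}\geq 1$, which holds trivially whenever $d\geq 0$ and $\snr>0$. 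Hence $f$ is non-decreasing on $[0,\infty)$ and $f(d)\geq f(0)\geq 0$, completing the proof.

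The main subtlety is that $D(Z)$ is not bounded: it may be arbitrarily large for noise far from Gaussian, driving both $I(X^*;X^*+Z)$ and $C(Z;P)$ to infinity, so the EPI lower estimate and the max-entropy upper estimate must scale compatibly in $D(Z)$. The precise constant $\frac{\snr}{3\snr+2}$ is pinned down by forcing the derivative condition (ii) to hold uniformly in $d\ge 0$ and $\snr>0$: any larger multiplicative factor on the right would fail the check near $d=0$. One could replace the EPI step above by the sharper bound $\tfrac{1}{2}\log(1+\snr\,\sigma[Z])$ of Corollary~\ref{cor:wc} (itself a consequence of the CLT plus fractional-EPI proof of Theorem~\ref{prop:doub}), but as $\sigma[Z]$ and $e^{2D(Z)}$ are not generally comparable, the EPI form above is the convenient choice for the calculus step.
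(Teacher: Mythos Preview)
Your argument is correct and takes a genuinely different---and considerably more elementary---route than the paper's. The paper expresses $X^*$ as a CLT limit of normalized sums of independent copies of $X$ and $Z$, invokes upper semi-continuity of entropy, and then applies the fractional entropy-power superadditivity of Lemma~\ref{lem:ep-fsa} (in the hypergraph form of~\cite{madiman:07}) together with the elementary estimate $\tfrac{1}{2}\log(e^{-2x}+\beta)\geq -\tfrac{x}{1+\beta}$ to obtain~\eqref{eqsnr}. By contrast, you avoid the CLT and the fractional EPI entirely: the classical two-summand EPI gives the lower bound $I(X^*;X^*+Z)\geq\tfrac12\log(1+\snr\,e^{2D(Z)})$, Gaussian maximum entropy gives the upper bound $C(Z;P)\leq\tfrac12\log(1+\snr)+D(Z)$, and the whole theorem collapses to the one-variable check $(2\snr+2)e^{2d}\geq 1$. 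This is strictly simpler, and your argument also yields the per-input bound~\eqref{eqsnr} (not just the capacity statement), so it would feed into the MAC application of Section~\ref{s:MAC} equally well. What the paper's approach buys is a uniform machinery that simultaneously delivers Theorem~\ref{prop:doub}, Corollary~\ref{cor:wc}, and the MIMO Corollary~\eqref{dimsnr} from the same CLT-plus-fractional-EPI template; your approach trades that structural unity for a shorter, self-contained proof of this particular theorem.

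One cosmetic point: your formulas $N(X^*)=P$ and $N(Z)=Ne^{-2D(Z)}$ drop the factor $2\pi e$ from the entropy power; since only the ratio $N(X^*)/N(Z)=\snr\,e^{2D(Z)}$ enters, the conclusion is unaffected, but you should either correct the normalization or phrase the EPI step directly in terms of that ratio. You should also remark (as the paper does) that one may assume $h(Z)>-\infty$, since otherwise $I(X^*;X^*+Z)=+\infty$ and the inequality is trivial.
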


An analogous result for additive-noise
multiple access channels is established
in Theorem~\ref{MACimprovement} in Section~\ref{s:MAC},
and for linear additive-noise MIMO channels 
in Theorem~\ref{MIMOimprovement} in Section~\ref{mimosection}.
Although all three of these results
are stated for one-dimensional channels,
their extension to 
$\RL^d$ for any finite dimension $d\geq 1$ is
entirely straightforward.

\subsection{Bibliographical comments}
%%%%%%%%%%%%%%%%%%%%%%%%%%%%%%%%%%%%%%%%%%%%%%%%%%%%%%%%%%%%%%%%%%%%%
\label{s:biblio}

Before proceeding to prove 
Theorems~\ref{zamirimprovement}--\ref{MIMOimprovement},
we make
some more bibliographical remarks.

First, we note that 
further results on robust channel input distributions
in a similar spirit to~\cite{zamir:04}
are given, e.g., in~\cite{zamir:02,shulman:04,love:06,philosof:07},
and rate-distortion analogues to some of these bounds
are derived, e.g., in~\cite{feder-zamir:92,K-zamir:06}. 

In a different line of work,
{\it structured} random codes have  been 
shown to be both 
practical and effective; 
the survey~\cite{ramji:book} traces the story that 
began with~\cite{barron-joseph:12}. Moreover, 
although these ``sparse superposition'' codes are
built with Gaussian entries and were originally 
developed for the AWGN channel, 
they have subsequently been shown to be universal~\cite{barbier:19}. 
The aforementioned robustness results may 
be seen as approximate analogues of this for 
unstructured random codes.

In yet another direction,
the {\it restricted capacity}
of an additive-noise channel with noise $Z$ and input 
power constraint $P$ is defined by,
$$
C^{LC}(Z;P):= \sup_{X} I(X;X+Z) ,
$$
where the supremum is taken over all  
symmetric and log-concave random variables
$X$ such that $\E(X^2) \leq P$. 
Then~\cite[Corollary~3]{madiman:21} shows that,
if the noise $Z$ is symmetric and log-concave, we have:
$$
0\leq C(Z;P)-C^{LC}(Z;P) \leq \frac{1}{2} \log\Big( \frac{\pi e}{6}\Big).
$$
In other words, if the noise is symmetric and log-concave,
then the rate loss with respect to the ``true'' channel capacity 
incurred by restricting the input 
to also be symmetric and log-concave, is at most 
$\frac{1}{2}\log_2 \big(\frac{\pi e}{6}\big)\approx 0.254$ bits. 

\subsection{Proof of Theorem~\ref{zamirimprovement}}
%%%%%%%%%%%%%%%%%%%%%%%%%%%%%%%%%%%%%%%%%%%%%%%%%%%%%%%%%%%%%%%%%%%%%%%%%
\label{s:proof}

Let $X$ be any input with variance $P_X \leq P$. 
Let $Z$ be the noise with variance $N$. Assume without 
loss of generality that $X$ and $Z$ have zero mean
and that $h(Z) > -\infty$. 

By the central limit theorem, $X^*$ can be expressed 
as the weak limit of,
$$
\frac{\sum_{i=1}^n{(X_i} +{Z_i})}{\sqrt{2n\frac{P_X+N}{P}}},
$$
where $\{X_i\}$ and $\{Z_i\}$ are i.i.d.\ copies of $X$ and $Z$,
respectively. By the upper-semicontinuity of entropy,
\begin{equation} \label{uppersemiMI}
h(X^*+Z) \geq \limsup_{n\to\infty}
{h\left( \frac{\sum_{i=1}^n{(X_i} +{Z_i})}{\sqrt{2n\frac{P_X+N}{P}}}
+Z \right)}.
\end{equation}
By the scaling property of differential entropy, we have, for any $n \geq 1$,
\begin{align} 
\nonumber
&h\left(\frac{X_1 + \cdots + X_n+Z_1 + \cdots + Z_n}
	{\sqrt{2n\frac{P_X+N}{P}}}+Z\right) \\ 
\label{scalingbeforeEPIMI}
&= h\left(
	 X_1 + \ldots + X_n+Z_1 + \ldots + Z_n
	+\sqrt{2n\frac{P_X+N}{P}}Z
	\right) - \frac{1}{2}\log{\Bigl({2n\frac{P_X+N}{P}}\Bigr)} 
\end{align}
Now we apply~\cite[Theorem~3]{madiman:07}, with $\mathcal{C}$ 
being the following 
collection of subsets of 
$\{1,\ldots,2n+1\}$ of cardinality $2$: 
$\big\{\{1,2\},\ldots,\{1,n+1\}\big\} 
\cup \big\{\{i,j\}:2\leq i \leq n, n + 1\leq j \leq 2n + 1\big\}$.
This gives:
\begin{align*}
 e^{2h\Bigl(
	X_1 + \cdots + X_n+Z_1 + \cdots + Z_n
	+
	\sqrt{2n\frac{P_X+N}{P}}Z 
	\Bigr)} 
&\geq 
\frac{1}{n}\Biggl[ne^{2h\Bigl(\sqrt{2n\frac{P_X+N}{P}}Z + Z_1\Bigr)}  
	+ {n^2}e^{2h(X_1+Z_1)}  \Biggr]  \\
&\geq e^{2h\Bigl(\sqrt{2n\frac{P_X+N}{P}}Z\Bigr)} + {n}e^{2h(X_1+Z_1)},
\end{align*}
where the last inequality follows the fact that entropy does not decrease 
on the addition of an independent random variable. Thus, 
by~\eqref{scalingbeforeEPIMI}, 
and scaling,
\begin{align} 
&
	h\Bigl(\frac{X_1 + \cdots + X_n+Z_1 + \ldots + Z_n}
	{\sqrt{2n\frac{P_X+N}{P}}}+Z\Bigr) 
	\nonumber\\ 
&\qquad\geq 
	\frac{1}{2}\log{\Biggl( e^{2h\Bigl(\sqrt{2n\frac{P_X+N}{P}}Z\Bigr)} 
	+ {n}e^{2h(X_1+Z_1)}\Biggr)} 
	- \frac{1}{2}\log{\Bigl({2n\frac{P_X+N}{P}}\Bigr)} 
	% \label{XYgenEPI}\\
	\nonumber\\
&\qquad=
	\frac{1}{2}\log{\Biggl(e^{2h(Z)} 
	+ \frac{P}{2(P_X+N)}e^{2h(X+Z)}\Biggr)} 
	\nonumber\\ 
&\qquad\geq 
	\frac{1}{2}\log{\Biggl(e^{2h(Z)} 
	+ \frac{P}{2(P+N)}e^{2h(X+Z)}\Biggr)},
	\nonumber
\end{align}
where in the last inequality we used that $P_X \leq P$.
Therefore, since
$I(X;X+Z)=h(X+Z)-h(Z)$, we obtain:
\begin{align}
&
	h\Bigl(\frac{X_1 + \cdots + X_n+Z_1 + \ldots + Z_n}
	{\sqrt{2n\frac{P_X+N}{P}}}+Z\Bigr) 
	\nonumber\\
&\qquad= 
	I(X;X+Z)
	+ \frac{1}{2}\log{\Biggl(e^{-2I(X;X+Z)} 
	+ \frac{P}{2(P+N)}\Biggr)} + h(Z).
	\label{mutualinfoout}
\end{align}

We have the elementary bound 
$$
\frac{1}{2}\log{\Bigl(e^{-2x} + \beta\Bigr)} \geq -\frac{1}{1+\beta}x, \quad x \geq 0, \beta \geq 0.
$$
Thus, by~\eqref{mutualinfoout},
\begin{align*}
h\left(\frac{X_1 + \cdots + X_n+Z_1 + \cdots + Z_n}
	{\sqrt{2n\frac{P_Y+N}{P}}}+Z\right)  
&\geq \Bigl(1-\frac{2}{2+\frac{P}{P+N}}\Bigr)I(X;X+Z) + h(Z) \\
&= \frac{\snr}{3\snr+2}I(X;X+Z) + h(Z),
\end{align*}
where the right-hand side is independent of $n$.
Therefore, by~\eqref{uppersemiMI},
$$
h(X^*+Z) \geq  \frac{\snr}{3\snr+2}I(X;X+Z) + h(Z),
$$
or,
\begin{equation} \label{eqsnr}
I(X^*;X^*+Z) \geq \frac{\snr}{3\snr+2}I(X;X+Z).
\end{equation}
Finally, 
taking the supremum over all $X$ with 
$\mathbb{E}(X^2) \leq P$, gives the result. 
\qed

\subsection{Multiple access channels}
%%%%%%%%%%%%%%%%%%%%%%%%%%%%%%%%%%%%%%%%%%%%%%%%%%%%%%%%%%%%%%%%%%%%%%%%%
\label{s:MAC}

Consider a memoryless, additive-noise,
multiple access channel (MAC) 
with two senders $X_1, X_2$,  
power constraints $P_1, P_2$,
and output $Y = X_1 + X_2 + Z$, 
where $Z$ is arbitrary real-valued 
additive noise of power $N=\Var(Z)$.
The capacity region $\mathcal{C}(Z;P_1,P_2)$ of
this channel is given~\cite{verdu:90} by
the closure of the set of all 
pairs $(R_1,R_2)$ of nonnegative rates
$R_1,R_2$ satisfying,
\begin{align} 
R_1 &\leq 
	I(X_1;Y|X_2,V)
	=\sum_v{\mathbb P}(V=v) I(X_1;Y|X_2,V=v),
    \label{r1}\\ 
R_2 &\leq I(X_2;Y|X_1,V)
	=\sum_v{\mathbb P}(V=v) I(X_2;Y|X_1,V=v),
    \label{r2}\\ 
R_1 + R_2 &\leq I(X_1,X_2;Y|V)
	= \sum_v{\mathbb P}(V=v) I(X_1,X_2;Y|V=v),
    \label{r1r2sum}
\end{align}
for some triplet $(V,X_1,X_2)$ of random variables 
with $\mathbb{E}(X^2_i) \leq P_i$ for $i = 1,2$,
with $X_1$ and $X_2$ being conditionally independent given $V$,
with $V$ taking at most $2$ values, and with $(V,X_1,X_2)$ being 
independent of the noise $Z$. 

In~\cite{philosof:07}, the sum-capacity 
of the MIMO-MAC channel is studied, and it is noted there
that the half-a-bit bound of Theorem~\ref{halfabit}
extends to the capacity region of the MAC. Here will 
apply our Theorem~\ref{zamirimprovement}, more 
specifically the bound established in~\eqref{eqsnr}, 
to obtain an improvement of this bound 
for the MAC capacity region in the low-SNR regime. 

\begin{theorem}[MAC: Robustness of Gaussian inputs]
\label{MACimprovement}
Let $\mathcal{C}(Z;P_1,P_2)$ 
denote the capacity region 
of the two-user,
additive-noise MAC with noise $Z$ having power $N$, 
and power constraints $P_1,P_2$. Consider the rate region
$\mathcal{C^*}(Z;P_1,P_2)$ defined as the closure of the 
set of all pairs $(R_1,R_2)$ of nonnegative rates 
that can be achieved by time-sharing
Gaussian inputs, namely, rates $(R_1,R_2)$ that satisfy,
\begin{align} 
R_1 &\leq I(X^*_1;Y|X^*_2,V),
    \label{r1star}\\ 
R_2 &\leq I(X^*_2;Y|X^*_1,V),
    \label{r2star}\\ 
R_1 + R_2 &\leq I(X^*_1,X^*_2;Y|V),
    \label{r1r2sumstar}
\end{align}
for some triplet $(V,X^*_1,X^*_2)$ of random variables 
with $\mathbb{E}(X^{*2}_i) \leq P_i$ for $i = 1,2$,
with $X^*_1$ and $X^*_2$ being conditionally independent Gaussians given $V$,
with $V$ taking at most $2$ values, and with $(V,X^*_1,X^*_2)$ being 
independent of the noise $Z$.

Then $\mathcal{C}^*(Z;P_1,P_2)$ contains the region defined as $\mathcal{C}(Z;P_1,P_2),$ but with \eqref{r1}, \eqref{r2} and \eqref{r1r2sum}
replaced with,
\begin{align*} 
% \label{r1snr}
R_1 &\leq \sum_{v}\mathbb{P}(V=v)
	\Big(\frac{\snr_{1,v}}{3\snr_{1,v} + 2}\Big)
	I(X_1;Y|X_2,V=v), \\ 
% \label{r2snr}
R_2 &\leq \sum_{v}\mathbb{P}(V=v)
	\Big( \frac{\snr_{2,v}}{3\snr_{2,v}+ 2}\Big)
	I(X_2;Y|X_1,V=v), \\ 
% \label{r3snr}
R_1 + R_2 &\leq\sum_{v}\mathbb{P}(V=v)
	\Big(\frac{\snr_{v}}{3\snr_{v} + 2}\Big)
	I(X_1,X_2;Y|V=v),
\end{align*}
where $\snr_{i,v}=\Var(X_i|V=v)/N$, $i=1,2$, and 
$\snr_{v}=\Var(X_1+X_2|V=v)/N$.
\end{theorem}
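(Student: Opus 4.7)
My plan is to reduce the MAC setting to repeated conditional applications of the single-user bound~\eqref{eqsnr} established in the proof of Theorem~\ref{zamirimprovement}, and then to average over the time-sharing variable $V$.

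I would start with an arbitrary triplet $(V,X_1,X_2)$ satisfying the constraints defining $\mathcal{C}(Z;P_1,P_2)$: namely, $X_1,X_2$ conditionally independent given $V$, $\mathbb{E}(X_i^2)\leq P_i$, and $(V,X_1,X_2)$ independent of $Z$. Without loss of generality I will assume that each $X_i$ has zero conditional mean given $V$, because shifting by the conditional mean preserves every mutual information appearing in \eqref{r1}--\eqref{r1r2sum} (each involves only a difference of entropies on the noise side) and can only decrease $\mathbb{E}(X_i^2)$ by the law of total variance. I then introduce Gaussian surrogates $X_i^*$ defined by $X_i^*\mid V=v\sim N(0,\Var(X_i\mid V=v))$, taken conditionally independent given $V$. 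The marginal second moments satisfy $\mathbb{E}(X_i^{*2})=\mathbb{E}(X_i^2)\leq P_i$, so $(V,X_1^*,X_2^*)$ is an admissible triplet in the definition of $\mathcal{C}^*(Z;P_1,P_2)$.

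Next, for each value $V=v$ I would apply the inequality~\eqref{eqsnr} in turn to each of the three rate constraints. For the single-user bounds I use the identity $I(X_1;Y\mid X_2,V=v) = I(X_1;X_1+Z\mid V=v)$ (obtained by subtracting off $X_2$), and likewise for the Gaussian surrogates; then \eqref{eqsnr} applied to the conditional additive-noise channel with input variance $N\,\snr_{1,v}$ gives
\[
I(X_1^*;Y\mid X_2^*,V=v) \;\geq\; \frac{\snr_{1,v}}{3\snr_{1,v}+2}\,I(X_1;Y\mid X_2,V=v),
\]
and analogously for $X_2$. For the sum-rate bound I would use $I(X_1,X_2;Y\mid V=v) = I(X_1+X_2;(X_1+X_2)+Z\mid V=v)$, which holds because $h(Y\mid V=v,X_1,X_2) = h(Z) = h(Y\mid V=v,X_1+X_2)$, and similarly for the starred variables. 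By conditional independence, $(X_1+X_2)\mid V=v$ has variance $N\,\snr_{v}$, while $(X_1^*+X_2^*)\mid V=v$ is Gaussian with the same variance, so \eqref{eqsnr} yields the desired bound with factor $\frac{\snr_v}{3\snr_v+2}$.

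Finally, I multiply each conditional inequality by $\mathbb{P}(V=v)$ and sum over $v$ to recover the three displayed bounds in the theorem. Any pair $(R_1,R_2)$ satisfying the (smaller) right-hand sides is then achievable by the Gaussian triplet $(V,X_1^*,X_2^*)$, hence lies in $\mathcal{C}^*(Z;P_1,P_2)$. The principal technical ingredient --- the entropy-power manipulations that yield the factor $\frac{\snr}{3\snr+2}$ --- is already fully packaged inside \eqref{eqsnr}; the only delicate bookkeeping specific to the MAC is the reduction to conditionally zero-mean inputs and the verification of the sum-rate identity $I(X_1,X_2;Y\mid V) = I(X_1+X_2;Y\mid V)$, both of which are routine.
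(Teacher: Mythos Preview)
Your proposal is correct and follows essentially the same route as the paper: construct conditionally Gaussian surrogates $X_i^*\mid V=v$ with matching conditional variances, apply the single-user inequality~\eqref{eqsnr} termwise for each $v$ to each of the three rate constraints, and average over $v$. You spell out a few details the paper leaves implicit---the zero-conditional-mean reduction and the identities $I(X_1;Y\mid X_2,V=v)=I(X_1;X_1+Z\mid V=v)$ and $I(X_1,X_2;Y\mid V=v)=I(X_1+X_2;(X_1+X_2)+Z\mid V=v)$---but the argument is the same.
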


\begin{proof}
It suffices to show that, for any triplet $(V,X_1,X_2)$
with $\mathbb{E}(X^2_i) \leq P_i$ for $i = 1,2$, 
$X_1$ and $X_2$ being conditionally independent given $V$, 
and with $(V,X_1,X_2)$ being independent of $Z$, 
we can find a triplet $(V^*,X_1^*,X_2^*)$
with $\mathbb{E}(X^{*2}_i) \leq P_i$ for $i = 1,2$, 
$X_1^*$ and $X_2^*$ being conditionally independent Gaussians 
given $V^*$, and with $(V^*,X_1^*,X_2^*)$ being 
independent of $Z$, such that,
\begin{align} \label{MACWTS1}
I(X^*_1;Y|X^*_2,V) &\geq \sum_{v}\mathbb{P}(V=v)\frac{\Var{(X_1|V=v)}}{3\Var{(X_1|V=v)} + 2N}I(X_1;Y|X_2,V=v) \\  \label{MACWTS2}
I(X^*_2;Y|X^*_1,V) &\geq \sum_{v}\mathbb{P}(V=v)\frac{\Var{(X_2|V=v)}}{3\Var{(X_2|V=v)} + 2N}I(X_2;Y|X_1,V=v)\\  \label{MACWTS3}
 I(X^*_1,X^*_2;Y|V) &\geq \sum_{v}\mathbb{P}(V=v)\frac{\Var{(X_1+X_2|V=v)}}{3\Var{(X_1+X_2|V=v)} + 2N}I(X_1,X_2;Y|V=v).
\end{align}
But this follows by taking $V^*$ an independent copy of $V$ 
and $X_1^*,X_2^*$ conditionally independent Gaussians given $V^*$, 
having the same conditional variances as $X_1|V$ and $X_2|V$. 
Moreover, we can take $(V,X_1^*,X_2^*)$ independent of $Z$. 
Then, clearly, $\mathbb{E}(X^{*2}_i) = \mathbb{E}(X^{2}_i) \leq P_i.$  
Expanding the mutual informations on the left-hand side of \eqref{MACWTS1} as,
$$
 I(X^*_1;Y|X^*_2,V) = \sum_{v}\mathbb{P}(V=v) I(X^*_1;Y|X^*_2,V=v),
$$
and applying~\eqref{eqsnr}, gives~\eqref{MACWTS1};
similarly, expanding the left-hand sides
of \eqref{MACWTS2} and \eqref{MACWTS3} 
and applying~\eqref{eqsnr} gives
\eqref{MACWTS2}--\eqref{MACWTS3}, and the result follows.  
\end{proof}

The conditioning random variable $V$ in the characterisation of the 
capacity region ${\mathcal C}(Z;P_1,P_2)$ in~(\ref{r1})--(\ref{r1r2sum})
is necessary because of convexity: Convex combinations 
of achievable rate pairs are achievable by 
time-sharing~\cite{cover:book2,elgamal:book}. 
Similarly, rate pairs in the region ${\cal C}^*(Z;P_1,P_2)$
defined by~\eqref{r1star}--\eqref{r1r2sumstar} are achievable 
by time-sharing Gaussian codebooks. 
Clearly the inner bound to the capacity region
given by Theorem~\ref{MACimprovement} is a strict improvement 
of the bound implied by one-receiver, one-transmitter per-user case 
of the half-a-bit MIMO-MAC extension~\cite[Theorem 3]{philosof:07} 
in the low-SNR regime. This indicates that,
even in the low-SNR regime, time-sharing of Gaussian codebooks 
can always achieve a ``fractional'' subset of the capacity region. 

\subsection{MIMO channels} 
%%%%%%%%%%%%%%%%%%%%%%%%%%%%%%%%%%%%%%%%%%%%%%%%%%%%%%%%%%%%%%%%%%%%%%%%%%%%
\label{mimosection}
Consider the linear, additive-noise MIMO channel,
with $d_{t}$ transmitters and $d_{r}$ receivers, 
input $X \in \mathbb{R}^{d_t}$, sum power constraint,
$$
\mathrm{tr}\mathbb{E}XX^T = \mathbb{E}\|X\|^2 \leq P,
$$
and output,
$$Y = \Hrm X+Z \in \R^{d_r},$$
where $\Hrm \in \R^{d_r\times d_t}$ is the channel matrix and 
$Z \in \R^{d_r}$ is arbitrary additive noise with covariance 
matrix $N$,
independent of the input. 
The capacity of this channel, see e.g.~\cite{elgamal:book,philosof:07},
is given by:
\begin{equation} \label{MIMOconstraint}
C(\Hrm,Z;P) = \sup_{X:\mathbb{E}\|X\|^2\leq P}I(X;\Hrm X+N).
\end{equation}
The additive rate loss (in bits) of using uncorrelated Gaussian 
inputs was shown in \cite[Theorem 1]{philosof:07} to be no more than:
\[ 
\begin{cases}\label{caseszamir}
   \frac{d_r}{2}\log_{2}{\bigl(1+\frac{d_t}{d_r}\bigr)},
	& \text{if } d_r\leq d_t,\\
    \frac{d_t}{2},
	&               \text{if } d_r\geq d_t.
\end{cases}
\]

In this section we provide an alternative bound
on the rate loss
which also remains valid at low SNR.
The main tool we will need is the
following straightforward multidimensional extension 
of inequality~\eqref{eqsnr} established in the proof
of Theorem~\ref{zamirimprovement}.

\begin{corollary}
Let $X,X^*$ and $Z$ be independent
random vectors in $\RL^d$ with
covariance matrices $K_X,K_{X^*}$ and $N$, respectively.
If $X^*$ is Gaussian, then:
\begin{equation} 
\label{dimsnr}
I(X^*;X^*+Z) \geq 
\frac{\det{(K_{X^*})}^{\frac{1}{d}}}{2\det{(K_X+N)}^{\frac{1}{d}} 
+ \det{(K_{X^*})}^{\frac{1}{d}}}\times I(X;X+Z).
\end{equation}
\end{corollary}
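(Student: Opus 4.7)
The plan is to adapt the proof of Theorem~\ref{zamirimprovement} from $\R$ to $\R^d$, keeping its five ingredients intact: the (multivariate) central limit theorem; upper semi-continuity of differential entropy under weak convergence with uniformly bounded second moments; the scaling identity $N(AY) = |\det A|^{2/d} N(Y)$ for invertible $d\times d$ matrices $A$; the fractional entropy-power inequality of \cite[Theorem~3]{madiman:07}, which carries over verbatim to $\R^d$; and the elementary pointwise bound $\tfrac{d}{2}\log(e^{-2x/d} + \beta) \geq -x/(1+\beta)$ for $x,\beta \geq 0$, which follows from convexity together with the observations that the corresponding difference $f(x) = \tfrac{d}{2}\log(e^{-2x/d} + \beta) + x/(1+\beta)$ satisfies $f(0) \geq 0$ and $f'(0) = 0$.

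Concretely, I would select an invertible matrix $B$ with $B(K_X+N)B^T = \tfrac12 K_{X^*}$, say $B = 2^{-1/2}K_{X^*}^{1/2}(K_X+N)^{-1/2}$, so that $|\det B|^{2/d} = \tfrac12 \det(K_{X^*})^{1/d}/\det(K_X+N)^{1/d}$. Letting $\{X_i\}, \{Z_i\}$ be i.i.d.\ copies of $X, Z$ independent of everything, the multivariate CLT gives that $B\cdot \tfrac{1}{\sqrt n}\sum_{i=1}^n(X_i+Z_i)$ converges weakly to a Gaussian $\tilde X^* \sim N(0,\tfrac12 K_{X^*})$. Writing $X^* = \tilde X^* + W$ with $W \sim N(0,\tfrac12 K_{X^*})$ independent, monotonicity of entropy under independent addition together with upper semi-continuity then yield
$$h(X^* + Z) \geq \limsup_{n\to\infty}\, h\!\left(\tfrac{B}{\sqrt n}\sum_{i=1}^n(X_i+Z_i) + Z\right).$$

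Applying the scaling identity to the right-hand side and invoking \cite[Theorem~3]{madiman:07} in $\R^d$ for the $2n+1$ independent vectors $(X_1,\dots,X_n, Z_1,\dots, Z_n, \sqrt n B^{-1} Z)$, using the same 2-element subset collection as in the 1-dim proof (and then dropping a $Z_1$-term by monotonicity of $N$ under independent addition), yields
$$N\!\left(\sum_{i=1}^n X_i + \sum_{i=1}^n Z_i + \sqrt n B^{-1} Z\right) \geq N(\sqrt n B^{-1} Z) + n\, N(X+Z),$$
and multiplying through by $|\det B|^{2/d}/n$ simplifies this to $N\!\left(\tfrac{B}{\sqrt n}\sum_{i=1}^n (X_i+Z_i) + Z\right) \geq N(Z) + \beta\, N(X+Z)$, with $\beta := \det(K_{X^*})^{1/d}/(2\det(K_X+N)^{1/d})$.

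Finally, taking $\tfrac{d}{2}\log$ of both sides, subtracting $h(Z)$, and using $I(X;X+Z) = \tfrac{d}{2}\log(N(X+Z)/N(Z))$ gives $I(X^*;X^*+Z) \geq I(X;X+Z) + \tfrac{d}{2}\log\!\left(e^{-2I(X;X+Z)/d} + \beta\right)$. The elementary pointwise inequality above converts this to $I(X^*;X^*+Z) \geq \tfrac{\beta}{1+\beta} I(X;X+Z)$, and $\tfrac{\beta}{1+\beta}$ simplifies to exactly the coefficient $\det(K_{X^*})^{1/d}/(2\det(K_X+N)^{1/d} + \det(K_{X^*})^{1/d})$ in the statement. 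The main technical nuisance will be carefully tracking the determinant factors through the matrix rescaling by $B$; the fractional-EPI application itself is combinatorially identical to the one-dimensional proof, because entropy power in $\R^d$ satisfies the same fractional superadditivity.
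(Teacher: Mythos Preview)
Your proposal is correct and is precisely the ``straightforward multidimensional extension'' the paper alludes to without spelling out: you replace the scalar rescaling $\sqrt{2n(P_X+N)/P}$ by the matrix $B$ with $B(K_X+N)B^T=\tfrac12 K_{X^*}$, and otherwise run the five steps of the proof of Theorem~\ref{zamirimprovement} verbatim in $\R^d$. Your explicit handling of the decomposition $X^*\stackrel{d}{=}\tilde X^*+W$ (so that the CLT limit has covariance $\tfrac12 K_{X^*}$ rather than $K_{X^*}$) is in fact a clarification of a point the one-dimensional proof leaves implicit.
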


Using inequality \eqref{dimsnr} we obtain
a multiplicative upper bound on the rate loss 
(with respect to capacity),
induced by using uncorrelated $N(0,P/d_r)$ inputs.
The resulting multiplicative factor
depends on the number of receivers, 
the channel matrix, and the SNR.
The main point of
Theorem~\ref{MIMOimprovement} is that it
improves on the additive bound of~\cite{philosof:07} 
in the low-SNR regime:

\begin{theorem}
\label{MIMOimprovement}
Let $C(\Hrm,Z;P)$ denote the capacity of the linear MIMO channel with channel 
matrix $H \in \R^{d_r\times d_t}$, sum power constraint $P$, and additive 
noise $Z$ having covariance matrix $N$. Let $X^* \sim \mathcal{N}(0,\frac{P}{d_r}\mathrm{I}_{d_r})$ and denote $\snr_{\text{\rm M}} = \frac{P}{\mathrm{tr}{N}}$. Then:
$$
I(X^*;\Hrm X^*+Z) \geq  
\frac{\snr_{\text{\rm M}}\det{(\Hrm\Hrm^T)^{\frac{1}{d_r}}}}{\snr_{\text{\rm M}}\bigl(2\mathrm{tr}(\Hrm\Hrm^T) + \det{(\Hrm\Hrm^T)^{\frac{1}{d_r}}}\bigr)+2}
\times C(\Hrm,Z;P).
$$
\end{theorem}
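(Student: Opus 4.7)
The plan is to reduce the MIMO statement to the single-input additive-noise bound of Corollary~\eqref{dimsnr}, applied to the triple $(\Hrm X,\Hrm X^*,Z)$ of independent $\RL^{d_r}$-valued random vectors. The key observation enabling this reduction is the identity
$$I(X;\Hrm X+Z)=h(\Hrm X+Z)-h(Z)=I(\Hrm X;\Hrm X+Z),$$
which holds because $\Hrm X$ is a deterministic function of $X$ and $Z$ is independent of the input, so $h(\Hrm X+Z\mid X)=h(Z)$. The same identity holds with $X^*$ in place of $X$. Hence both the arbitrary-input mutual information $I(X;\Hrm X+Z)$ and its Gaussian counterpart $I(X^*;\Hrm X^*+Z)$ on the MIMO channel coincide with the corresponding mutual informations of the $d_r$-dimensional additive-noise channel $W\mapsto W+Z$ whose input is $\Hrm X$ or $\Hrm X^*$, respectively.

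Having made this identification, Corollary~\eqref{dimsnr} applied in $\RL^{d_r}$ yields
$$I(\Hrm X^*;\Hrm X^*+Z)\geq\frac{\det(\Hrm K_{X^*}\Hrm^T)^{1/d_r}}{2\det(\Hrm K_X\Hrm^T+N)^{1/d_r}+\det(\Hrm K_{X^*}\Hrm^T)^{1/d_r}}\,I(\Hrm X;\Hrm X+Z),$$
where $K_X,K_{X^*}$ are the respective input covariance matrices. With $K_{X^*}$ a scalar multiple $(P/d_r)\mathrm{I}$ of the identity, the numerator simplifies to $(P/d_r)\det(\Hrm\Hrm^T)^{1/d_r}$. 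To upper-bound the determinant in the denominator I would appeal to two classical matrix-analytic facts: the AM-GM inequality on the eigenvalues of a PSD matrix, giving $\det(A)^{1/d_r}\leq\mathrm{tr}(A)/d_r$; and the estimate $\mathrm{tr}(AB)\leq\mathrm{tr}(A)\mathrm{tr}(B)$ for PSD $A,B$ applied to $A=\Hrm^T\Hrm$, $B=K_X$. Combined with the input power constraint $\mathrm{tr}(K_X)=\mathbb{E}\|X\|^2\leq P$, these give
$$\det(\Hrm K_X\Hrm^T+N)^{1/d_r}\leq\frac{1}{d_r}\bigl(\mathrm{tr}(\Hrm^T\Hrm K_X)+\mathrm{tr}(N)\bigr)\leq\frac{1}{d_r}\bigl(P\,\mathrm{tr}(\Hrm\Hrm^T)+\mathrm{tr}(N)\bigr).$$

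Substituting these bounds into the prefactor and dividing numerator and denominator by $\mathrm{tr}(N)/d_r$ introduces $\snr_{\text{\rm M}}=P/\mathrm{tr}(N)$ and, after rearrangement, produces precisely the coefficient stated in the theorem. Taking the supremum on the right-hand side over all inputs $X$ with $\mathbb{E}\|X\|^2\leq P$ then replaces $I(X;\Hrm X+Z)$ by $C(\Hrm,Z;P)$, completing the argument. The conceptual content is carried entirely by Corollary~\eqref{dimsnr}; the remainder is bookkeeping. The only nontrivial step is the matrix-analytic one, namely translating the determinantal ratio into a scalar coefficient involving $\mathrm{tr}(\Hrm\Hrm^T)$ and $\det(\Hrm\Hrm^T)^{1/d_r}$ while cleanly absorbing the power constraint through the PSD trace bound. (The bound is only informative when $\Hrm$ has full row rank, i.e., when $\det(\Hrm\Hrm^T)>0$; otherwise both $\Hrm(\cdot)\Hrm^T$ determinants vanish and the statement degenerates.)
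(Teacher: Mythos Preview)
Your proposal is correct and follows essentially the same approach as the paper's own proof: reduce to the additive-noise bound of Corollary~\eqref{dimsnr} via the identity $I(X;\Hrm X+Z)=I(\Hrm X;\Hrm X+Z)$, simplify the numerator using $K_{X^*}=(P/d_r)\mathrm{I}$, bound the denominator with the AM--GM inequality $\det(A)^{1/d_r}\leq\mathrm{tr}(A)/d_r$ together with $\mathrm{tr}(AB)\leq\mathrm{tr}(A)\mathrm{tr}(B)$ for PSD matrices, absorb the power constraint via $\mathrm{tr}(K_X)\leq P$, and take the supremum over admissible inputs. The matrix-analytic steps, their order, and the final normalization by $\mathrm{tr}(N)$ to extract $\snr_{\text{\rm M}}$ all match the paper exactly.
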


\begin{proof}
Let $X$ be any input with covariance $K_X$ satisfying \eqref{MIMOconstraint},
and take $X^*$ to be a Gaussian vector with covariance matrix 
$\frac{P}{d_r}\mathrm{I}_{d_r}$. Then $\Hrm X^*$ is a Gaussian with 
covariance $\frac{P}{d_r}\Hrm\Hrm^T$, and applying~\eqref{dimsnr} we get,
\begin{align*}
I(X^*;\Hrm X^*+Z) 
&= I(\Hrm X^*;\Hrm X^*+Z)\\
&\geq \frac{\frac{P}{d_r}\det{(\Hrm\Hrm^T)^{\frac{1}{d_r}}}}
	{2\det{(\Hrm K_X\Hrm^T+N)}^{\frac{1}{d_r}} 
	+ \frac{P}{d_r}\det{(\Hrm\Hrm^T)^{\frac{1}{d_r}}}}
	\times I(\Hrm X;\Hrm X+Z).
\end{align*}
Using that for square, symmetric,
positive semidefinite matrices $A,B$,  
we have $\det{(A)}^{\frac{1}{d_r}} \leq \frac{1}{d_r}\mathrm{tr}{A}$ and 
$\mathrm{tr}(AB) \leq \mathrm{tr}(A)\mathrm{tr}(B)$, yields,
\begin{align*}
I(X^*;\Hrm X^*+Z)&\geq \frac{\frac{P}{d_r}\det{(\Hrm\Hrm^T)^{\frac{1}{d_r}}}}{\frac{2}{d_r}\mathrm{tr}(\Hrm K_X\Hrm^T+N) + \frac{P}{d_r}\det{(\Hrm\Hrm^T)^{\frac{1}{d_r}}}}\times I(\Hrm X;\Hrm X+Z) \\
&= \frac{{P}\det{(\Hrm\Hrm^T)^{\frac{1}{d_r}}}}{{2}\mathrm{tr}(K_X\Hrm^T\Hrm) + {2}\mathrm{tr}(N) + {P}\det{(\Hrm\Hrm^T)^{\frac{1}{d_r}}}}
\times I(\Hrm X;\Hrm X+Z) \\
&\geq  \frac{P\det{(\Hrm\Hrm^T)^{\frac{1}{d_r}}}}{2\mathrm{tr}(K_X)\mathrm{tr}(\Hrm^T\Hrm)+2\mathrm{tr}(N) + P\det{(\Hrm\Hrm^T)^{\frac{1}{d_r}}}}
\times I(\Hrm X;\Hrm X+Z).
\end{align*}
Finally, since $X$ satisfies the sum power 
constraint \eqref{MIMOconstraint},
$$
I(X^*;\Hrm X^*+Z) \geq \frac{P\det{(\Hrm\Hrm^T)^{\frac{1}{d_r}}}}{P\bigl(2\mathrm{tr}(\Hrm^T\Hrm)+\det{(\Hrm\Hrm^T)^{\frac{1}{d_r}}}\bigr)+2\mathrm{tr}(N)}
\times I(X;\Hrm X+Z),
$$
and the result follows upon taking the supremum 
over all inputs $X$ satisfying the constraint. 
\end{proof}

\section*{Acknowledgements}
%%%%%%%%%%%%%%%%%%%%%%%%%%%%%%%%%%%%%%%%%%%%%%%%%%%%%%%%%%%%%%%%%%%%
We are grateful to the two anonymous reviewers for their useful comments,
and particularly to the reviewer who identified a small gap in our original
proof of Theorem~\ref{levystabilityd}.

\newpage

\bibliographystyle{plain}
\bibliography{ik} 
% \bibliography{ik} 

\end{document}